\newcommand{\eM}     {$\epsilon$\protect\nobreakdash-machine}
\newcommand{\eMs}    {$\epsilon$\protect\nobreakdash-machines}
\newcommand{\Process}{\mathcal{P}}
\newcommand{\MeasSymbol}   { {X} }
\newcommand{\meassymbol}   { {x} }
\newcommand{\biinfinity}	{ \overleftrightarrow {\meassymbol} }
\newcommand{\Past}	{ \overleftarrow {\MeasSymbol} }
\newcommand{\past}	{ {\overleftarrow {\meassymbol}} }
\newcommand{\Future}	{ \overrightarrow{\MeasSymbol} }
\newcommand{\future}	{ \overrightarrow{\meassymbol} }
\newcommand{\causalstate}	{ \sigma }
\newcommand{\hmu}		{h_\mu}
\newcommand{\EE}		{{\bf E}}
\theoremstyle{plain}   
\theoremstyle{plain}   \newtheorem{Lem}{Lemma}
\theoremstyle{plain} 	
\theoremstyle{plain} 	
\theoremstyle{plain} 	\newtheorem{Prop}{Proposition}
\theoremstyle{plain} 	
\theoremstyle{plain}	\newtheorem*{Rem}{Remark}
\theoremstyle{plain}	\newtheorem{Def}{Definition} 
\theoremstyle{plain}	
\theoremstyle{plain}   \newtheorem{Cla}{Claim}
\theoremstyle{plain} 
\def\norm#1{\|#1\|}
\def\N{\mathbb{N}}
\def\R{\mathbb{R}}
\def\Z{\mathbb{Z}}
\def\P{\mathbb{P}}
\def\Ex{\mathbb{E}}
\def\XX{\mathcal{X}}
\def\th{\tilde{h}}
\def\CS{S}
\def\cs{\sigma}
\def\MS{X}
\def\ms{x}
\def\CSSet{\mathcal{S}}
\def\L{\mathcal{L}}
\begin{document}

\title{Infinite Excess Entropy Processes with Countable-State Generators}

\author{Nicholas F. Travers}
\email{ntravers@math.ucdavis.edu}
\affiliation{Complexity Sciences Center}
\affiliation{Mathematics Department}

\author{James P. Crutchfield}
\email{chaos@ucdavis.edu}
\affiliation{Complexity Sciences Center}
\affiliation{Mathematics Department}
\affiliation{Physics Department\\
University of California at Davis,\\
One Shields Avenue, Davis, CA 95616}
\affiliation{Santa Fe Institute\\
1399 Hyde Park Road, Santa Fe, NM 87501}

\date{\today}

\bibliographystyle{unsrt}

% ************************* ABSTRACT *************************
\begin{abstract}

We present two examples of finite-alphabet, infinite excess entropy processes
generated by invariant hidden Markov models (HMMs) with countable state sets.
The first, simpler example is not ergodic, but the second is. It appears these
are the first constructions of processes of this type. Previous examples of
infinite excess entropy processes over finite alphabets admit only invariant
HMM presentations with uncountable state sets. 

\vspace{0.1in}

\noindent {\bf Keywords}: stationary stochastic process, hidden Markov model,
epsilon-machine, ergodicity, entropy rate, excess entropy, mutual information

\end{abstract}

\pacs{
02.50.-r  %  Probability theory, stochastic processes, and statistics
89.70.+c  %  Information science 
05.45.Tp  %  Time series analysis
02.50.Ey  %  Stochastic processes 
% 02.50.Ga  %  Markov processes 
% 05.20.-y  %  Classical statistical mechanics
% 05.45.-a  %  Nonlinear dynamics and nonlinear dynamical systems
% 89.75.Kd  %  Complex Systems: Patterns 
}
\preprint{Santa Fe Institute Working Paper 11-11-XXX}
\preprint{arxiv.org:1111.XXXX [XXXX]}
\maketitle

%************* INTRODUCTION **************

\vspace{- 5 mm}
\section{Introduction}
\label{sec:Introduction}

For a stationary process $(\MS_t)$ the \emph{excess entropy} $\EE$ is the
mutual information between the infinite past $\Past = \ldots \MS_{-2} \MS_{-1}$
and the infinite future $\Future = \MS_0 \MS_1 \ldots$ . It has a long history
and is widely employed as a measure of correlation and complexity in a variety
of fields, from ergodic theory and dynamical systems to neuroscience and
linguistics \cite{Junc79,Crut82b,Gras86,Lind88,Bial01a,Debo11a}; see Ref.
\cite{Crut01a} and references therein for a review. 

An important question in classifying a given process is whether it is
\emph{finitary} (finite excess entropy) or \emph{infinitary} (infinite excess
entropy). Over a finite alphabet, many of the simple process classes commonly
studied are always finitary. These include all i.i.d. processes, Markov chains,
and processes with finite-state hidden Markov model (HMM) presentations. There
also exist several well known examples of finite-alphabet infinitary processes,
though. For instance, the symbolic dynamics at the onset of chaos in the
logistic map and similar dynamical systems \cite{Crut01a} and the stationary
representation of the binary Fibonacci sequence \cite{Wern97} are both
infinitary. 

These latter processes, however, only admit invariant HMM presentations
\footnotemark[1] with uncountable state sets. Indeed, any process
generated by an invariant countable-state HMM either has positive entropy rate
or consists entirely of periodic sequences, which these do not; see App.
\ref{AppendixB}. Versions of the \emph{Santa Fe Process} introduced in Ref.
\cite{Debo11a} are finite-alphabet infinitary processes with positive entropy
rate. However, they were not constructed directly as HMMs, and it seems
unlikely that they should have any invariant countable-state presentations.
To the best of our knowledge, to date there are no examples of finite-alphabet,
infinitary processes with invariant countable-state presentations. 

\footnotetext[1]{\emph{Invariant} here means that the state sequence of the
underlying Markov chain and, hence, the output sequence generated by the HMM
are both stationary processes \cite{Lohr10a}.}

We present two such examples. The first is nonergodic, and the information
conveyed from the past to the future essentially consists of the ergodic
component along a given realization. This example is straightforward to
construct and, though previously unpublished, we suspect that others are
aware of this or similar constructions. The second, ergodic example,
though, is more involved and we believe that both its structure and properties
are novel. 

To put these contributions in perspective, note that \emph{any} stationary
finite-alphabet process may be trivially represented as an invariant HMM with
an uncountable state set, in which each infinite history $\past$ corresponds to
a single state. Thus, it is clear invariant HMMs with uncountable state sets
can generate finite-alphabet infinitary processes. In contrast, for any
finite-state HMM $\EE$ is always finite---bounded by the logarithm of the
number of states. The case of countable-state HMMs lies in between the
finite-state and uncountable-state cases, and it was previously not clear
whether it is possible to have countable-state invariant HMMs that generate
infinitary finite-alphabet processes and, in particular, ergodic ones. Here,
we show that infinite excess entropy is indeed possible for processes with
countable-state state generators, even when ergodicity is required. 

%************* BACKGROUND ***************
\section{Background}
\label{sec:Background}

% ******** Excess Entropy **********
\subsection{Excess Entropy} 
\label{subsec:ExcessEntropy}

\begin{Def}
For a stationary, finite-alphabet process $(\MS_t)_{t \in \Z}$ the excess entropy $\EE$ is the mutual information between the infinite past
$\Past = ... \MS_{-2} \MS_{-1}$ and the infinite future
$\Future = {\MS_0} {\MS_1} ...$:
\begin{align}
\label{eq:Emut}
\EE = I[\Past;\Future] = \lim_{t \to \infty} I[\Past^t;\Future^t] ~,
\end{align} 
where $\Past^t = \MS_{-t} ... \MS_{-1}$ and $\Future^t = \MS_0 ...
\MS_{t-1}$ are the length-$t$ past and future, respectively.  
\end{Def}
In Refs. \cite{Crut01a,Lohr10a} it is shown that $\EE$ may 
also be expressed alternatively as: 
\begin{align}
\label{eq:Elim}
\EE = \lim_{t \to \infty} \left( H[\Future^t] - \hmu t \right) ~,
\end{align}
where $\hmu$ is the process \emph{entropy rate}:
\begin{align}
\label{eq:hmu}
\hmu = \lim_{t \to \infty} \frac{H[\Future^t]}{t} = \lim_{t \to \infty} H[\MS_t|\Future^t] ~.
\end{align}
That is, the excess entropy $\EE$ is the asymptotic amount of entropy (information)
in length-$t$ blocks of random variables beyond that explained by the entropy rate. 
The excess entropy derives its name from this formulation. We also use this
formulation to establish that the process of Sec. \ref{subsec:Example1} is
infinitary. 

Expanding the block entropy $H[\Future^t]$ in Eq. (\ref{eq:Elim}) with the
chain rule and recombining terms gives another important formulation:
\begin{align}
\EE = \sum_{t = 1}^{\infty} \left( \hmu(t) - \hmu \right) ~,
\label{eq:Esum}
\end{align}
where $\hmu(t)$ is the \emph{length-$t$ entropy-rate approximation}:
\begin{align}
\label{eq:hmut}
\hmu(t) = H[\MS_{t-1}|\Future^{t-1}] ~,
\end{align}
the conditional entropy in the $t_{th}$ symbol given the previous
$t-1$ symbols. This final formulation will be used to establish that the
process of Sec. \ref{subsec:Example2} is infinitary. 

% ********* Hidden Markov Models **********
\subsection{Hidden Markov Models}
\label{subsec:HiddenMarkovModels}

There are two primary types of hidden Markov models: edge-emitting (or
\emph{Mealy}) and state-emitting (or \emph{Moore}). We
work with the former edge-emitting type, but the two are equivalent
in that any model of one type over a finite alphabet may converted to a
model of the other type without changing the cardinality of the state set
by more than a constant factor---the alphabet size. Thus, for our purposes, Mealy
HMMs are sufficiently general. We also consider only \emph{invariant HMMs},
as defined in \cite{Lohr10a}, over finite alphabets and with countable state
sets. 

\begin{Def}
An \emph{invariant, edge-emitting, countable-state, finite-alphabet hidden
Markov model} (hereafter referred to simply as a \emph{countable-state HMM})
is a 4-tuple $(\CSSet, \XX, \{T^{x}\}, \pi)$ where:
\begin{enumerate}
\item $\CSSet$ is a countable set of states.
\item $\XX$ is a finite alphabet of output symbols.
\item $T^{(x)}, x \in \XX$, are symbol labeled transition matrices.
	$T^{(x)}_{\cs \cs'}$ is the probability that state $\cs$ transitions
	to state $\cs'$ on symbol $x$.
\item $\pi$ is an \emph{invariant} or \emph{stationary} distribution for the underlying Markov chain 
          over states with transition matrix $T = \sum_{x \in \XX} T^{(x)}$.
That is, $\pi$ satisfies $\pi = \pi T$. 
\end{enumerate}
\end{Def} 

\begin{Rem}
``Countable'' in Property 1 means either finite or countably infinite. If the
state set $\CSSet$ is finite, we also refer to the HMM as \emph{finite-state}.
\end{Rem}

A hidden Markov model may be depicted as a directed graph with labeled
edges. The vertices are the states $\causalstate \in \CSSet$ and, for all 
$\causalstate, \causalstate' \in \CSSet$ with
$T^{(x)}_{\causalstate \causalstate'} > 0$, there is a directed edge from
state $\causalstate$ to state $\causalstate'$ labeled $p|x$ for the symbol
$x$ and transition probability $p = T^{(x)}_{\causalstate \causalstate'}$. 
These probabilities are normalized so that the sum of probabilities on all
outgoing edges from each state is $1$. 
An example is given in Fig. \ref{fig:EvenProcess}. 

\begin{figure}[h]
\begin{minipage}[b]{0.4\textwidth}
\centering
\includegraphics[scale=0.75]{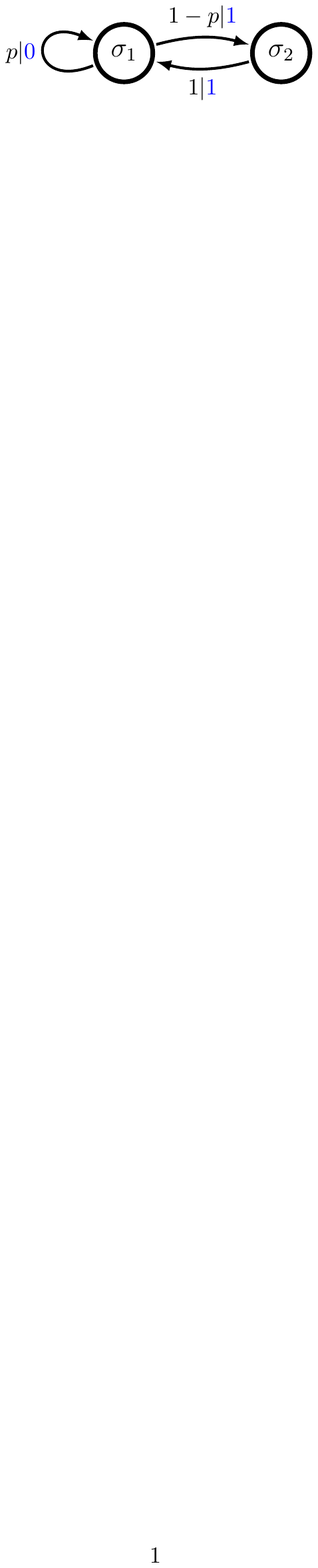}
\end{minipage}
\hspace{0.5cm}
\begin{minipage}[b]{0.4\textwidth}
\centering
\begin{align*}
T^{(0)} & =
  \left(
  \begin{array}{cc}
  	p & 0 \\
	0 & 0 \\
  \end{array}
  \right) \nonumber \\
T^{(1)} & =
  \left(
  \begin{array}{cc}
	0 & 1-p \\
	1 & 0 \\
  \end{array}
  \right) 
\end{align*}
\end{minipage}
\vspace{2 mm}
\caption{A hidden Markov model (the \eM) for the Even Process. The support for
  this process consists of all binary sequences in which blocks of
  uninterrupted $1$s are even in length, bounded by $0$s.  After each 
  even length is reached, there is a probability $p$ of breaking the block
  of $1$s by inserting a $0$. The machine has two internal states
  $\CSSet = \{ \cs_1, \cs_2 \}$, a two symbol alphabet $\XX = \{0,1\}$, and
  a single parameter $p \in (0,1)$ that controls the transition probabilities.
  The associated Markov chain over states is finite-state and irreducible and,
  thus, has a unique stationary distribution
  $\pi = (\pi_1, \pi_2) = \left( 1/(2-p), (1-p)/(2-p) \right)$.
  The graphical representation of the machine is given on the left, with the
  corresponding transition matrices on the right. In the graphical
  representation the \textcolor{blue}{symbols} labeling the transitions have
  been colored \textcolor{blue}{blue}, for visual contrast, while the
  transition probabilities are black.
  } 
\label{fig:EvenProcess}
\end{figure}

The operation of a HMM may be thought of as a weighted random walk on the
associated graph. That is, from the current state $\cs$ the next state $\cs'$
is determined by following an outgoing edge from $\cs$ according to the edges'
relative probabilities (or weights). During the transition, the HMM
outputs the symbol $x$ labeling this edge. 

The state sequence $(\CS_t)$ determined in such a fashion is simply a Markov chain with transition 
matrix $T$. However, we are interested not simply in the state sequence of the HMM, but rather the 
associated sequence of output symbols $(\MS_t)$ that are generated by reading the labels off the 
edges as they are followed. The interpretation is that an observer of the HMM may directly observe
this sequence of output symbols, but not the hidden internal states. Alternatively, one may consider the
Markov chain over edges $(E_t)$, of which the observed symbol sequence $(\MS_t)$ is simply a projection. 

In either case, the process $(\MS_t)$ generated by the HMM
$(\CSSet, \XX, \{T^{x}\}, \pi)$ is defined as the output sequence of edge
symbols, which results from running the Markov chain over states according to
the stationary law with marginals $\P(\CS_0) = \P(\CS_t) = \pi$. 
It is easy to verify that this process is itself stationary, with word probabilities given by: 
\begin{align}
\label{eq:Pw}
\P(w) = \norm{\pi T^{(w)}}_1 ~,
\end{align}
where for a given word $w = w_1 ... w_n \in \XX^+$, $T^{(w)}$ is the
word transition matrix $T^{(w)} = T^{(w_1)} \cdot \cdot \cdot T^{(w_n)}$. 
The \emph{process language} is the set of words $\L = \{ w : \P(w) > 0 \}$.

\begin{Rem}
Even for a \emph{noninvariant HMM} $(\CSSet, \XX, \{T^{x}\}, \pi)$, where the
state distribution $\pi$ is not stationary, one may always define a
one-sided process $(\MS_t)_{t \geq 0}$ with marginals given by:
\begin{align}
\P(\Future^{|w|} = w) = \norm{\pi T^{(w)}}_1 ~. 
\end{align}
Furthermore, though the state sequence $(\CS_t)_{t \geq 0}$ will not be a
stationary process if $\pi$ is not a stationary distribution for $T$, the
output sequence $(\MS_t)_{t \geq 0}$ may still be stationary. In fact,
Ref. \cite[Example 2.9]{Lohr10a} showed that any one-sided process over a
finite alphabet $\XX$, stationary or not, may be represented as a countable-state
noninvariant HMM in which the states correspond to finite-length words in $\XX^+$,
of which there are only countably many. By stationarity, a one-sided stationary process
generated by such a noninvariant HMM can be uniquely extended to a two-sided 
stationary process. So, in a sense, any two-sided stationary process $(\MS_t)_{t \in \Z}$
can be said to be generated by a noninvariant countable-state HMM. 
Though, this is a slightly unnatural interpretation of process generation in that 
the two-sided process $(\MS_t)_{t \in \Z}$ is not directly the process obtained 
by reading symbols off the edges of the HMM as it runs along transitioning between 
states in bi-infinite time. In either case, the space of stationary finite-alphabet
processes generated by noninvariant countable-state HMMs is too large: it
includes all stationary finite-alphabet processes. Due to this, we restrict
to the case of invariant HMMs where both the state sequence $(\CS_t)$ and
output sequence $(\MS_t)$ are stationary. Clearly, if one allows
finite-alphabet processes generated by noninvariant countable-state HMMs
there are infinitary examples. And so, in the following development HMM will
implicitly mean invariant HMM, but this will no longer be stated. 
\end{Rem}

We consider now an important property known as unifilarity. This property is
useful in that many quantities are analytically computable only for unifilar
HMMs. In particular, for unifilar HMMs the entropy rate $\hmu$ is often
directly computable, unlike the nonunifilar case. Both of the examples
constructed in Sec. \ref{sec:Examples} are unifilar, as is the Even Process
HMM of Fig. \ref{fig:EvenProcess}. 

\begin{Def}
A HMM $(\CSSet, \XX, \{T^{x}\}, \pi)$ is \emph{unifilar} if for each
$\cs \in \CSSet$ and $x \in \XX$
there is at most one outgoing edge from state $\cs$ labeled with symbol $x$ in the associated graph $G$.  
\end{Def}
It is well known that for any finite-state unifilar HMM the entropy rate in
the output process $(\MS_t)$ is simply the conditional entropy in the next
symbol given the current state:
\begin{align}
\label{eq:UnifilarEntropyRateFormula}
\hmu = H[\MS_0|\CS_0] = \sum_{\cs \in \CSSet} \pi_{\cs} h_{\cs} ~,
\end{align}
where $\pi_{\cs}$ is the stationary probability of state $\cs$ and
$h_{\cs} = H[\MS_0|\CS_0 = \cs]$ is the conditional entropy in the next
symbol given that the current state is $\cs$. 

We are unaware, though, of any proof that this is generally true for
countable-state HMMs. If the entropy in the stationary distribution $H[\pi]$
is finite, then a proof along the lines given in Ref. \cite{Trav11b} 
carries through to the countable-state case and Eq.
(\ref{eq:UnifilarEntropyRateFormula}) still holds. 
However, countable-state HMMs may sometimes have $H[\pi] = \infty$.
Furthermore, it can be shown \cite{Lohr10a} that the excess entropy $\EE$
is always bounded above by $H[\pi]$. So, for the infinitary process
of Sec. \ref{subsec:Example2} we need slightly more than unifilarity to establish the value of $\hmu$. 
To this end, we consider a property known as \emph{exactness} \cite{Trav11a}. 

\begin{Def}
A HMM is said to be \emph{exact} if for a.e. infinite future
$\future = \ms_0 \ms_1 ...$ generated by the HMM an observer synchronizes
to the internal state after a finite time. That is, for a.e.  $\future$ there
exists $t \in \N$ such that $H[\CS_t|\Future^t = \future^t] = 0$, where 
$\future^t = \ms_0 \ms_1 ... \ms_{t-1}$ denotes the the first $t$ symbols of a given $\future$.  
\end{Def}

In App. \ref{AppendixA} we prove the following proposition. 

\begin{Prop}
For any countable-state, exact, unifilar HMM, the entropy rate is given by
the standard formula of Eq. (\ref{eq:UnifilarEntropyRateFormula}). 
\label{prop:ExactEntropyRateFormula}
\end{Prop}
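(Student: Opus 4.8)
The plan is to compute the entropy rate through its block form $\hmu = \lim_{t \to \infty} \tfrac{1}{t} H[\Future^t]$ and to sandwich $H[\Future^t]$ between two bounds that both grow at rate $H[\MS_0 | \CS_0]$ per symbol. The engine for both bounds is a single unifilar identity: for any $0 \le t_0 < t$,
\[
H[\MS_{t_0} \cdots \MS_{t-1} \mid \CS_{t_0}] = (t - t_0)\, H[\MS_0 | \CS_0] ~.
\]
First I would establish this by the chain rule, writing the left-hand side as $\sum_{i=t_0}^{t-1} H[\MS_i \mid \MS_{t_0}\cdots\MS_{i-1}, \CS_{t_0}]$. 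By unifilarity the conditioning variables determine $\CS_i$; by the Markov property of the edge process $\MS_i$ is conditionally independent of the past given $\CS_i$; and by stationarity the marginal law of $\CS_i$ is $\pi$. Hence each summand equals $\sum_{\cs} \pi_\cs h_\cs = H[\MS_0|\CS_0]$, which is finite since $h_\cs \le \log\abs{\XX}$. Taking $t_0 = 0$ and using $H[\Future^t] \ge H[\Future^t \mid \CS_0]$ immediately yields the lower bound $\hmu \ge H[\MS_0|\CS_0]$.

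For the upper bound, fix $t_0$ and split $H[\Future^t] = H[\Future^{t_0}] + H[\MS_{t_0}\cdots\MS_{t-1} \mid \Future^{t_0}]$, then expand the last term as $(t-t_0)H[\MS_0|\CS_0] + I[\MS_{t_0}\cdots\MS_{t-1}; \CS_{t_0} \mid \Future^{t_0}]$ using the identity above (it survives the extra conditioning on $\Future^{t_0}$ because, given $\CS_{t_0}$, the future block is independent of the past). The point is to control the residual mutual-information term. Let $A_{t_0}$ be the set of realizations synchronized by time $t_0$, i.e. those whose first $t_0$ symbols $w$ satisfy $H[\CS_{t_0} \mid \Future^{t_0} = w] = 0$; this is a property of $w$ alone, so $A_{t_0}$ is $\Future^{t_0}$-measurable, and by unifilarity it is increasing in $t_0$. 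On words $w \in A_{t_0}$ the state $\CS_{t_0}$ is a deterministic function of $w$, so the conditional mutual information vanishes; on the complementary words I would bound it crudely by $H[\MS_{t_0}\cdots\MS_{t-1} \mid \Future^{t_0} = w] \le (t - t_0)\log\abs{\XX}$. This gives
\[
I[\MS_{t_0}\cdots\MS_{t-1}; \CS_{t_0} \mid \Future^{t_0}] \le \P(A_{t_0}^c)\,(t-t_0)\log\abs{\XX} ~,
\]
hence $H[\Future^t] \le H[\Future^{t_0}] + (t-t_0)H[\MS_0|\CS_0] + \P(A_{t_0}^c)(t-t_0)\log\abs{\XX}$. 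Dividing by $t$ and letting $t \to \infty$ at fixed $t_0$ gives $\hmu \le H[\MS_0|\CS_0] + \P(A_{t_0}^c)\log\abs{\XX}$. Finally, letting $t_0 \to \infty$ and invoking exactness---which says precisely that a.e. realization synchronizes in finite time, so $\P(A_{t_0}^c) \to 0$---yields $\hmu \le H[\MS_0|\CS_0]$, completing the sandwich.

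The main obstacle is that the naive argument (condition on $\CS_0$ and discard $I[\CS_0;\Future^t] \le H[\CS_0] = H[\pi]$) fails exactly in the regime this proposition targets, namely $H[\pi] = \infty$, where the initial state carries unboundedly much information and the correction cannot be bounded by a constant. The resolution is to never pay for the full state entropy: synchronization makes the state a deterministic function of the observed symbols on a set of probability tending to one, and on the exceptional set I use only the per-symbol alphabet bound $\log\abs{\XX}$. The order of limits is essential---first $t \to \infty$ at fixed $t_0$, then $t_0 \to \infty$---so that the $o(t)$ contribution of the exceptional set is discarded before its probability is sent to zero. The remaining care is routine: verifying that $A_{t_0}$ depends only on the first $t_0$ symbols and is increasing (so that $\P(A_{t_0}^c)\to 0$ follows from the a.e.-finite synchronization time), and that the countable-state conditional-entropy manipulations are legitimate, which holds because every quantity appearing in the sandwich is finite.
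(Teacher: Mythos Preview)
Your proof is correct and rests on the same core mechanism as the paper's: partition words into synchronizing and non-synchronizing, observe that the state's extra information vanishes on the former and is crudely bounded by $\log|\XX|$ per symbol on the latter, then let exactness (together with unifilarity, so that synchronization persists) drive $\P(A_{t_0}^c) \to 0$. The paper packages this a bit more directly by working with the incremental form $\hmu(t+1) = H[\MS_t \mid \Future^t]$ and showing $\sum_{w \in \L_t} \P(w)(h_w - \th_w) \le \log|\XX|\cdot \P(NS_t) \to 0$ in a single limit, whereas you use the block form $H[\Future^t]/t$ and a double limit ($t \to \infty$ at fixed $t_0$, then $t_0 \to \infty$); but the essential argument is the same.
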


The HMM constructed in Sec. \ref{subsec:Example2} is both exact and unifilar, 
so Prop. \ref{prop:ExactEntropyRateFormula} applies. Using this explicit formula 
for $\hmu$, we will show that $\EE = \sum_{t=1}^{\infty} \left( \hmu(t) - \hmu \right)$ is infinite.

%*********** EXAMPLES ******************
\section{Constructions}
\label{sec:Examples}

We present two constructions of (invariant) countable-state HMMs that generate
infinitary processes. In the first example the output process is not ergodic,
but in the second it is. 

% ********** Example 1 ****************
\subsection{Heavy-Tailed Periodic Mixture: An infinitary nonergodic process with a countable-state presentation}
\label{subsec:Example1}

Figure \ref{fig:NonErgodic} depicts a countable-state HMM $M$, for a nonergodic
infinitary process $\Process$. The machine $M$ consists of a countable collection of disjoint 
strongly connected subcomponents $M_i$, $i \geq 2$. For each $i$, the component $M_i$ 
generates the periodic processes $\Process_i$ consisting of $i-1$ 1s followed by a $0$. 
The weighting over components is taken as a heavy-tailed distribution with infinite entropy. 
For this reason, we refer to the process $M$ generates as the 
\emph{Heavy-Tailed Periodic Mixture} (HPM) process.

\begin{figure}[h]
\includegraphics[scale=0.7]{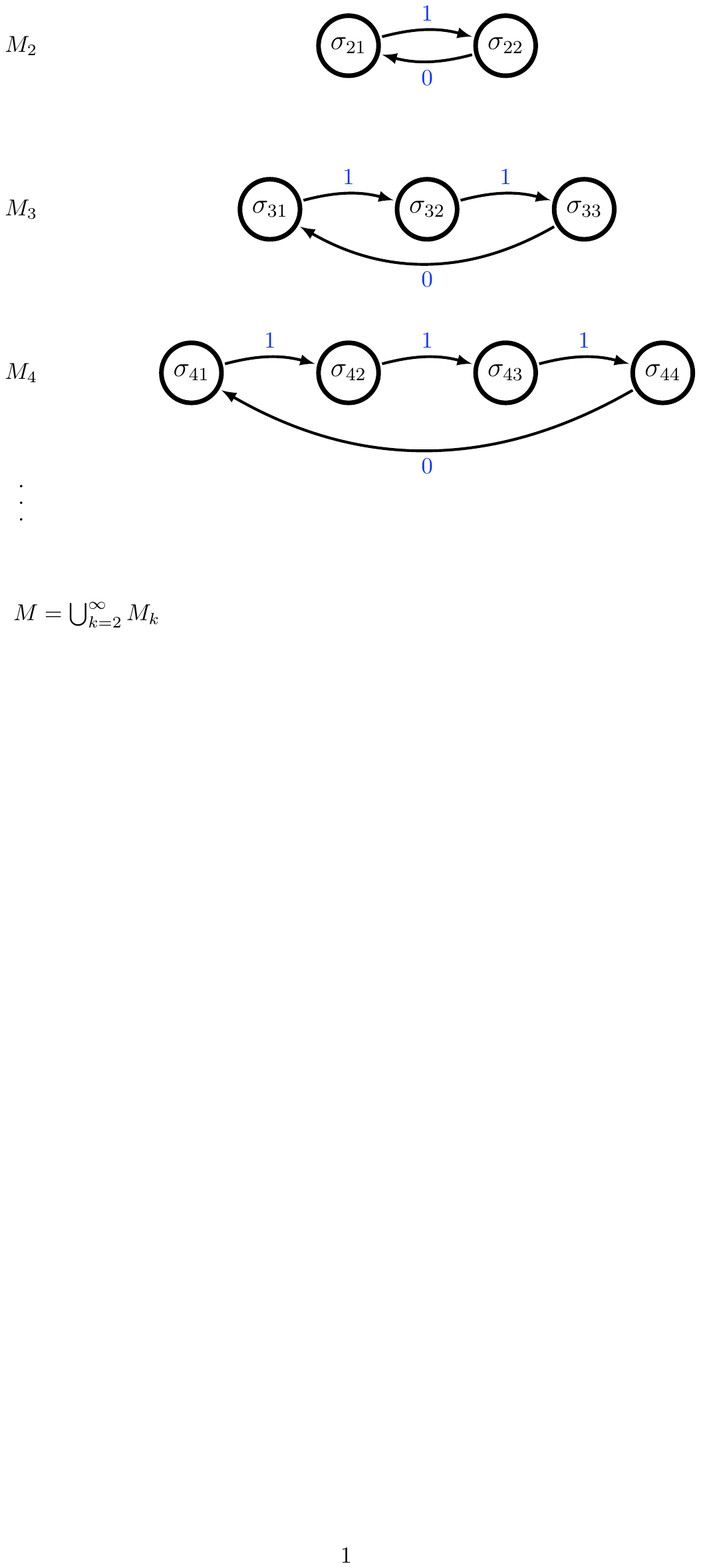}
\caption{A countable-state HMM for the Heavy-Tailed Periodic Mixture Process. 
  The machine $M$ is the union of the machines $M_i, i \geq 2$, generating the
  period-$i$ processes of $i-1$ $1$s followed by a $0$. All topologically
  allowed transitions have probability $1$. So, for visual clarity these
  probabilities are omitted from the edge labels and only the symbols labeling
  the transitions are given. The stationary distribution $\pi$ is chosen such
  that the combined probability $\mu_i$ of all states in the the $i_{th}$
  component is $\mu_{i} = C/(i \log^2 i)$, where
  $C = 1/ \left( \sum_{i=2}^{\infty} 1/(i \log^2 i) \right)$ is a normalizing
  constant. Formally, the HMM $M = (\CSSet, \XX, \{T^{(x)}\}, \pi)$ has alphabet
  $\XX = \{0,1\}$, state set
  $\CSSet = \{\cs_{ij}: i \geq 2, 1 \leq j \leq i \}$,
  stationary distribution $\pi$ defined by $\pi_{ij} = C/(i^2 \log^2 i)$, and
  transition probabilities $T^{(1)}_{ij,i(j+1)} = 1$ for $i \geq 2$ and
  $1 \leq j < i$,  $T^{(0)}_{ii,i1} = 1$ for $i \geq 2$, and all other
  transitions probabilities $0$. Note that all logs here (and throughout) are
  taken base $2$, as is typical when using information-theoretic quantities.}
\label{fig:NonErgodic} 
\end{figure}

Intuitively, the information transmitted from the past to the future for
the HPM Process is the ergodic component $i$ along with the phase of the
period-$i$ process $\Process_i$ in this component. This is more information
than simply the ergodic component $i$, which is itself an infinite amount of
information: $H[(\mu_2, \mu_3, ... ,)] = \infty$. Hence, $\EE$ should be
infinite. This intuition can be made precise using the ergodic decomposition
theorem of Debowski \cite{Debo09}, but we present a more direct proof here.

\begin{Prop}
The HPM Process has infinite excess entropy. 
\label{prop:NonErgodicInfiniteE}
\end{Prop}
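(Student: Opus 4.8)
The plan is to work from the block-entropy formulation $\EE = \lim_{t\to\infty}\bigl(H[\Future^t] - \hmu t\bigr)$ of \eqnref{eq:Elim}: I would first show $\hmu = 0$ and then show $H[\Future^t]\to\infty$. Throughout, let $K$ denote the (random) component index, so that $\P(K=i)=\mu_i = C/(i\log^2 i)$. The one fact driving everything is that the weight law has infinite Shannon entropy, $H[(\mu_i)] = \infty$: since $-\mu_i\log\mu_i \sim C/(i\log i)$ and $\sum_i 1/(i\log i)$ diverges, the component alone carries an infinite amount of information. The proof is essentially the statement that this information is transmitted from the past to the future.

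To get $\hmu = 0$ I would use $\hmu = \lim_t H[\MS_t\mid\Future^t]$ from \eqnref{eq:hmu}. Because each component is a deterministic period-$i$ cycle $1^{i-1}0$ and the HMM is unifilar, once a length-$t$ window contains two $0$s the gap between them reveals $i$ and the phase, so the next symbol is forced: $H[\MS_t\mid\Future^t = w] = 0$ for every word $w$ with at least two $0$s, while $H[\MS_t\mid\Future^t=w]\le 1$ otherwise. Conditioning on $K=i$, a length-$t$ window can contain fewer than two $0$s only when $i > t/2$ (otherwise it spans at least two periods). Hence $H[\MS_t\mid\Future^t]\le \P(\Future^t\text{ has fewer than two }0\text{s})\le \sum_{i>t/2}\mu_i$, the tail of a convergent series, which tends to $0$. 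So $\hmu = 0$ and $\EE = \lim_t H[\Future^t]$.

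For the lower bound on $H[\Future^t]$, let $A_t = \{K\le t/2\}$. On $A_t$ the window $\Future^t$ spans at least two full periods and therefore determines $K$, so $H[K\mid\Future^t, A_t] = 0$ and $H[\Future^t\mid A_t]\ge I[\Future^t;K\mid A_t] = H[K\mid A_t]$. Conditioning on the indicator of $A_t$ then gives $H[\Future^t]\ge \P(A_t)\,H[K\mid A_t]$. As $t\to\infty$ we have $\P(A_t)\to 1$, and the truncated entropy $H[K\mid K\le t/2] = \frac{1}{\P(A_t)}\bigl(-\sum_{i\le t/2}\mu_i\log\mu_i\bigr) + \log\P(A_t)$ converges to $H[(\mu_i)] = \infty$. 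Therefore $H[\Future^t]\to\infty$, and $\EE = \infty$.

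The genuine content is concentrated in one place: the divergence of the truncated head entropy $H[K\mid K\le t/2]$, i.e. the fact that the infinite entropy of the weight law is recovered in the limit by finite truncations and thereby transferred into the block entropy $H[\Future^t]$. The synchronization observations and the tail estimate for $\hmu = 0$ are routine (the latter works for any summable weight law), so the heavy tail is really needed only to make $H[(\mu_i)]$ infinite in the first place. I note that the same observation yields a shorter measure-theoretic argument: since a.e. realization contains infinitely many $0$s in each direction, $K$ is a.s. recoverable from the infinite past and from the infinite future, so the data-processing inequality gives $\EE = I[\Past;\Future]\ge I[K;K] = H[K] = \infty$ immediately. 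I would nonetheless present the block-entropy route above, as it stays inside the elementary formulation of \eqnref{eq:Elim} that the surrounding text emphasizes.
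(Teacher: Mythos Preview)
Your proof is correct and follows essentially the same route as the paper: both establish $\hmu = 0$ via the observation that words from components $i \le t/2$ contain at least two $0$s and hence determine the state (with the residual bounded by the tail $\sum_{i>t/2}\mu_i$), and both then show $H[\Future^t]\to\infty$ using that such words pin down the component. The only cosmetic difference is that the paper lower-bounds $H[\Future^t]$ by a direct word-by-word sum $\sum_{i\le t/2}\sum_{w\in W_{i,t}} -\P(w)\log\P(w)$ (each word having probability $C/(i^2\log^2 i)$), whereas you package the same idea as $H[\Future^t]\ge \P(A_t)\,H[K\mid A_t]$; both reduce to the divergence of $-\sum_i \mu_i\log\mu_i$.
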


\begin{proof}
For the HPM Process $\Process$ we will show that
(i) $\lim_{t \to \infty} H[\Future^t] = \infty$ and (ii) $\hmu = 0$.
The conclusion then follows immediately from Eq. (\ref{eq:Elim}). To this end,
we define sets:
\begin{align*}
&  W_{i,t}  = \{w: |w| = t
	\mbox{ and } w \mbox{ is in the support of process } \Process_i \}, \\
&  U_{t} = \bigcup_{2 \leq i \leq t/2} W_{i,t} ~,~\text{and}\\
&  V_t = \bigcup_{i > t/2} W_{i,t} ~.
\end{align*} 

Note that any word $w \in W_{i,t}$ with $i \leq t/2$ contains at least two 0s. Therefore:
\begin{enumerate}
\item No two distinct states $\cs_{ij}$ and $\cs_{ij'}$ with $i \leq t/2$ generate the same length $t$ word.
\item The sets $W_{i,t}, i \leq t/2$, are disjoint from both each other and $V_t$.
\end{enumerate}
It follows that each word $w \in W_{i,t}$, with  $i \leq t/2$, can only be
generated from a single state $\cs_{ij}$ of the HMM and has probability:
\begin{align}
\P(w) & = \P(\Future^t = w) \nonumber \\
  & = \P(\CS_0 = \cs_{ij}) \cdot \P(\Future^t = w |\CS_0 = \cs_{ij}) \nonumber \\
  & = \pi_{ij} \cdot 1 \nonumber \\
  & = C/(i^2 \log^2 i) ~.
\end{align} 
Hence, for any fixed $t$:
\begin{align*}
H[\Future^t] & = \sum_{|w| = t} \P(w) \log \left( \frac{1}{\P(w)} \right) \\
  & \geq \sum_{i = 2}^{\lfloor t/2 \rfloor} \sum_{w \in W_{i,t}}
  \frac{C}{i^2 \log^2(i)} \log \left( \frac{i^2 \log^2(i)}{C}
  \right) \\
  & = \sum_{i = 2}^{\lfloor t/2 \rfloor} \frac{C}{i \log^2(i)} \log \left( \frac{i^2 \log^2(i)}{C} \right) ~,
\end{align*}
so:
\begin{align}
\lim_{t \to \infty} H[\Future^t] \geq \sum_{i = 2}^{\infty} \frac{C}{i \log^2(i)} \log \left( \frac{i^2 \log^2(i)}{C} \right)  = \infty ~,
\end{align}
which proves Claim (i). Now, to prove Claim (ii) consider the quantity:
\begin{align}
\hmu(t+1) 
& =  H[\MS_t|\Future^t] \nonumber \\
& = \sum_{w \in U_t} \P(w) \cdot H[\MS_t|\Future^t = w] + \sum_{w \in V_t} \P(w) \cdot H[\MS_t|\Future^t = w] ~.
\label{eq:HFnPm_decomposition}
\end{align} 
On the one hand, for $w \in U_t$, $H[\MS_t|\Future^t = w] = 0$ since the
current state and, hence, entire future are completely determined by any
word $w \in U_t$. On the other hand, for $w \in V_t$, 
$H[\MS_t|\Future^t = w] \leq 1$ since the alphabet is binary.
Moreover, the combined probability of all words in the set $V_t$ is simply
the probability of starting in some component $M_i$ with $i > t/2$:
$\P(V_t) = \sum_{i > t/2} \mu_i$. Thus, by Eq. (\ref{eq:HFnPm_decomposition}),
$\hmu(t+1) \leq  \sum_{i > t/2} \mu_i$. Since $\sum_i \mu_i$ converges, 
it follows that $\hmu(t) \searrow 0$, which verifies Claim (ii). 
 
\end{proof}

% *********** Example 2 *************
\subsection{Branching Copy Process: An infinitary ergodic process with a countable-state presentation}
\label{subsec:Example2}

Figure \ref{fig:Ergodic} depicts a countable-state HMM $M$ for the ergodic,
infinitary \emph{Branching Copy Process}. Essentially, the machine $M$ consists
of a binary tree with loop backs and a self-loop on the root node. From the
root node a path is chosen down the tree with each left-right (or $0$-$1$)
choice equally likely. But, at each step there is also a chance of turning back
towards the root. The path back is a not a single step, however. It has length
equal to the number of steps taken 
down the tree before returning back, and copies the path taken down
symbol-wise with $0$s replaced by $2$s and $1$s replaced by $3$s.
There is also a high self-loop probability at the root node on symbol $4$,
so some number of 4s will normally be generated after returning to the root
node before preceding again down the tree.  The process generated by this 
machine is referred to as the Branching Copy (BC) Process, because
the branch taken down the tree is copied on the loop back to the root. 

\begin{figure}[h]
\includegraphics[scale=0.76]{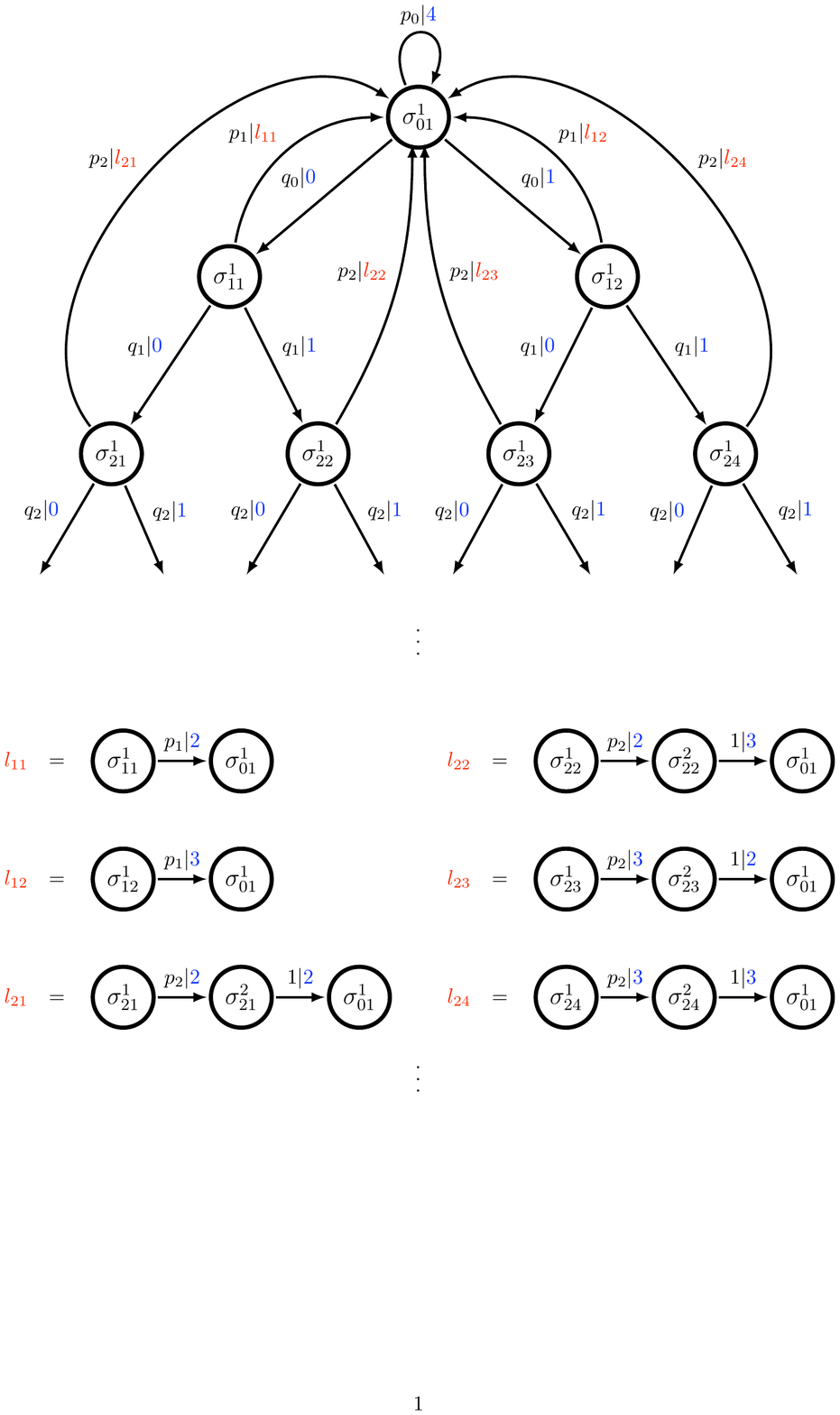}
\caption{A countable-state HMM for the Branching Copy Process. 
  The machine $M$ is essentially a binary tree with loop-back
  paths from each node in the tree to the root node and a self-loop on the
  root. At each node $\cs_{ij}^1$ in the tree there is a probability $2 q_i$
  of continuing down the tree and a probability $p_i = 1 - 2 q_i$ of turning
  back towards the root $\cs_{01}^1$ on path
  $l_{ij} \sim \cs_{ij}^1 \rightarrow \cs_{ij}^2 \rightarrow \cs_{ij}^3 ... \rightarrow \cs_{ij}^i \rightarrow \cs_{01}^1$.
  If the choice is made to head back, the next $i-1$ transitions are deterministic. The path of 0s and 1s taken to get from $\cs_{01}^1$ to
  $\cs_{ij}^1$ is copied on the return with $0$s replaced by $2$s and $1$s
  replaced by $3$s. Formally, the alphabet is $\XX = \{0,1,2,3,4\}$ and the
  state set is
  $\CSSet = \{\cs_{ij}^k: i \geq 0, 1 \leq j \leq 2^i, 1 \leq k \leq \max\{i,1\} \}$.
  The nonzero transition probabilities are as depicted graphically with
  $p_i = 1 - 2 q_i$ for all $i \geq 0$, $q_i = i^2/[2(i+1)^2]$ for all
  $i \geq 1$, and $q_0 > 0$ taken sufficiently small so that
  $H[(p_0,q_0,q_0)] \leq 1/300$. The graph is strongly connected so the Markov
  chain over states is irreducible.
  Claim \ref{cla:PositiveRecurrent} shows that the Markov chain is also positive
  recurrent and, hence, has a unique stationary distribution $\pi$.
  Claim \ref{cla:StationaryDistribution} gives the form of $\pi$.
  }
\label{fig:Ergodic} 
\end{figure}

By inspection we see that the machine is unifilar with \emph{synchronizing word}
$w = 4$, i.e. $H[\CS_1|\MS_0 = 4] = 0$. Since the underlying Markov chain over
states $(\CS_t)$ is positive recurrent, the state sequence $(\CS_t)$ and symbol
sequence $(\MS_t)$ are both ergodic. Thus, a.e. infinite future
$\future$ contains a $4$, so the machine is exact. Therefore,
Prop. \ref{prop:ExactEntropyRateFormula} may be applied, and we know the
entropy rate $\hmu$ is given by the standard formula of Eq.
(\ref{eq:UnifilarEntropyRateFormula}): $\hmu = \sum_{\cs} \pi_{\cs} h_{\cs}$.
Since $\P(\CS_t = \cs) = \pi_{\cs}$ for any $t \in \N$, we may alternatively represent this entropy rate as:
\begin{align}
\hmu & = \sum_{\cs} \left( \sum_{w \in \L_t}
	\P(w) \phi(w)_{\cs} \right) h_{\cs} \nonumber \\
  & = \sum_{w \in \L_t} \P(w)
	\left( \sum_{\cs} \phi(w)_{\cs} h_{\cs} \right) \nonumber \\
  & = \sum_{w \in \L_t} \P(w) \th_w ~,
\label{eq:hmu_expansion}
\end{align}
where $\L_t = \{w: |w| = t, \P(w) > 0\}$ is the set of length $t$ words in
the process language $\L$, $\phi(w)$ is the conditional state distribution
induced by the word $w$ (i.e., $\phi(w)_{\cs} = \P(\CS_t = \cs|\Future^t = w$)),
and $\th_w = \sum_{\cs} \phi(w)_{\cs} h_{\cs}$ is the $\phi(w)$-weighted average 
entropy in the next symbol given knowledge of the current state $\cs$. 

Similarly, for any $t \in \N$ the entropy-rate approximation $\hmu(t+1)$ may be expressed as:
\begin{align}
\label{eq:hmut_expansion} 
\hmu(t+1) & = H[\MS_t|\Future^t] = \sum_{w \in \L_t} \P(w) h_w ~,
\end{align}
where $h_w = H[\MS_t|\Future^t = w] = H[X_0|\CS_0 \sim \phi(w)]$ is the entropy in the next symbol given the word $w$. 
Combining Eqs. (\ref{eq:hmu_expansion}) and (\ref{eq:hmut_expansion}) we have for any $t \in \N$:
\begin{align}
\label{eq:hmut_hmu_diff}
\hmu(t+1) - \hmu =  \sum_{w \in \L_t} \P(w) (h_w - \th_w) ~.
\end{align}
By concavity of the entropy function, the quantity $h_w - \th_w$ is always nonnegative. 
Furthermore, in Claim \ref{cla:hw_and_thw} we show that $h_w - \th_w$ is always bounded below 
by some fixed positive constant for any word $w$ consisting entirely of 2s and 3s. 
Also, in Claim \ref{cla:ProbWt} we show that $\P(W_t)$ scales as $1/t$, where $W_t$ is 
the set of length-$t$ words consisting entirely of 2s and 3s. Combining these results it follows that
$\hmu(t+1) - \hmu ~ \widetilde{\geq} ~1/t$ and, hence, the sum
$\EE = \sum_{t=1}^{\infty} \left( \hmu(t) - \hmu \right)$ is infinite. 

A more detailed analysis with the claims and their proofs is given below. In this we will use the following notation:
\begin{itemize}
\item  $\P_{\cs}(\cdot) = \P(\cdot|\CS_0 = \cs)$, 
\item  $V_t = \{w \in \L_t: w \mbox{ contains only 0s and 1s} \}$ and
	$W_t = \{w \in \L_t: w \mbox{ contains only 2s and 3s} \}$,
\item $\pi_{ij}^k = \P(\sigma_{ij}^k)$ is the stationary probability of
	state $\sigma_{ij}^k$,
\item $R_{ij} = \{ \sigma_{ij}^1, \sigma_{ij}^2, ... , \sigma_{ij}^i \}$, and
\item $\pi_{ij} = \sum_{k=1}^i \pi_{ij}^k$ and
	$\pi_i^1 = \sum_{j=1}^{2^i} \pi_{ij}^1$. 
\end{itemize}
Note that:
\begin{align}
\label{eq:ProbVt}
\P_{\sigma_{01}^1}(\Future^t \in V_t) = \frac{1-p_0}{t^2} ~,
	\mbox{ for all } t \geq 1 ~,
\end{align}
and:
\begin{align}
\label{eq:p_i}
p_i = \frac{2i+1}{(i+1)^2} \leq \frac{2}{i}, \mbox{ for all } i \geq 1.
\end{align}
These facts will be used in the proof of the Claim \ref{cla:PositiveRecurrent}. 

\begin{Cla}
\label{cla:PositiveRecurrent} 
The underlying Markov chain over states for the HMM is positive recurrent. 
\end{Cla}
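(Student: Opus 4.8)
The plan is to invoke the standard criterion that an irreducible Markov chain is positive recurrent if and only if the expected first-return time to some (equivalently, any) single state is finite. The state graph is strongly connected, so the chain is irreducible, and it therefore suffices to fix the root state $\cs_{01}^1$ and show that its expected first-return time $\Ex[\tau]$ is finite, where $\tau$ denotes the number of steps until the walk first returns to $\cs_{01}^1$ after leaving it.

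First I would decompose a single excursion from the root by its first step. With probability $p_0$ the walk takes the self-loop and returns in one step; with probability $2q_0 = 1-p_0$ it descends to depth $1$. Conditioned on descending, only the depth matters: from a depth-$i$ node the walk moves to depth $i+1$ with probability $2q_i$ and otherwise, with probability $p_i = 1-2q_i$, commits to the deterministic return path. Writing $D$ for the depth at which the walk turns back, the probability of reaching depth at least $n$ telescopes,
\begin{align*}
\P(D \geq n) = \prod_{i=1}^{n-1} 2 q_i = \prod_{i=1}^{n-1} \frac{i^2}{(i+1)^2} = \frac{1}{n^2} ~,
\end{align*}
so that multiplying by the initial descent probability $2q_0 = 1-p_0$ recovers \eqnref{eq:ProbVt}. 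In particular $\P(D = \infty) = \lim_{n \to \infty} 1/n^2 = 0$, so the walk returns to the root almost surely.

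Next I would read off the return time. When the walk turns back at depth $n$, the excursion consists of $n$ downward steps followed by the length-$n$ return path, so $\tau = 2n$ on that event, and $\P(D = n) = p_n/n^2$. Hence
\begin{align*}
\Ex[\tau] = p_0 \cdot 1 + 2 q_0 \sum_{n=1}^{\infty} 2n \, \P(D=n) = p_0 + 4 q_0 \sum_{n=1}^{\infty} \frac{p_n}{n} ~.
\end{align*}
Finiteness now reduces to a one-line estimate: by \eqnref{eq:p_i} we have $p_n \leq 2/n$, so $\sum_n p_n/n \leq 2 \sum_n 1/n^2 < \infty$, giving $\Ex[\tau] < \infty$ and hence positive recurrence (positive recurrence being a class property of the irreducible chain).

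The difficulty here is bookkeeping rather than analysis. The main things to get right are the excursion geometry---that turning back at depth $n$ costs exactly $2n$ steps and happens with probability $p_n/n^2$---and the telescoping collapse of $\prod_{i=1}^{n-1} 2q_i$ to $1/n^2$. Once the return-time distribution is pinned down, summability is immediate from $p_i \leq 2/i$; one need only note that almost-sure return, which follows from $\P(D=\infty)=0$, is what licenses treating $\Ex[\tau]$ as a convergent sum rather than a quantity that is infinite on an escape event.
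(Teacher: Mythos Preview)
Your proof is correct and is essentially the same argument as the paper's: both fix the root $\cs_{01}^1$, use the telescoping product $\prod_{i=1}^{n-1} 2q_i = 1/n^2$ (which is exactly Eq.~(\ref{eq:ProbVt})) to find the law of the turn-back depth, observe that a turn-back at depth $n$ yields return time $2n$ with probability $(1-p_0)\,p_n/n^2$, and then invoke $p_n \leq 2/n$ to bound $\Ex[\tau]$ by a convergent series. The only cosmetic difference is that the paper phrases the depth computation as $\P_{\cs_{01}^1}(\Future^t \in V_t)$ rather than introducing your random variable $D$.
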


\begin{proof}
Let $\tau_{\cs_{01}^1} = \min \{t > 0: \CS_t = \cs_{01}^1 \}$ be the first return time to state $\cs_{01}^1$. Then, by continuity:  
\begin{align*}
\P_{\sigma_{01}^1}(\tau_{\cs_{01}^1} = \infty)	
& = \lim_{t \to \infty} \P_{\cs_{01}^1}(\tau_{\cs_{01}^1} > 2t) \\
& = \lim_{t \to \infty} \P_{\cs_{01}^1} (\Future^{t+1} \in V_{t+1}) \\
& = \lim_{t \to \infty} \frac{1-p_0}{(t+1)^2} \\
& = 0 ~.
\end{align*} 
Hence, the Markov chain is recurrent and we have:
\begin{align*}
\Ex_{\sigma_{01}^1}(\tau_{\cs_{01}^1})
& = \sum_{t =1}^{\infty} \P_{\cs_{01}^1}(\tau_{\cs_{01}^1} = t) \cdot t \\
& = p_0 \cdot 1 + \sum_{t =1}^{\infty} \P_{\cs_{01}^1}(\tau_{\cs_{01}^1} = 2t) \cdot 2t \\
& = p_0 + \sum_{t =1}^{\infty} \P_{\cs_{01}^1}(\Future^t \in V_t) \cdot p_t \cdot 2t \\
& \leq  p_0 + \sum_{t =1}^{\infty} \frac{1-p_0}{t^2} \cdot \frac{2}{t} \cdot 2t \\
& < \infty ~,
\end{align*}
from which it follows that the chain is also positive recurrent. Note that the topology 
of the chain implies the first return time may not be an odd integer greater than 1. 
\end{proof}

\begin{Cla}
\label{cla:StationaryDistribution} 
The stationary distribution $\pi$ has:
\begin{align} 
\label{eq:pi_ij1}
\pi_{ij}^1 & =  \frac{C}{i^2 \cdot 2^i}
	~~,~~ i \geq 1 ,~ 1 \leq j \leq 2^i ~, \\
\label{eq:pi_ijk}
\pi_{ij}^k & =  \frac{C}{i^2 \cdot 2^i} \cdot \frac{2i +
1}{(i+1)^2} ~~,~~ i \geq 1 , ~ 1 \leq j \leq 2^i , ~ 2 \leq k \leq i ~,
\end{align}
where $C = \pi_{01}^1 (1-p_0)$. 
\end{Cla}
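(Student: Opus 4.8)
The plan is to read the stationary equations $\pi = \pi T$ directly off the graph of Fig.~\ref{fig:Ergodic} and solve them. By Claim~\ref{cla:PositiveRecurrent} the chain is positive recurrent, so a stationary distribution exists and is unique; hence it suffices to show that \emph{any} stationary $\pi$ must have the stated form, with the single free scale then fixed by normalization. I would split the balance equations into three families according to the three kinds of states: (a) the tree nodes $\cs_{ij}^1$ ($i \geq 1$), each of which has exactly one incoming edge, from its unique parent at level $i-1$ (the root $\cs_{01}^1$ when $i=1$) carried with the continue-down probability $q_{i-1}$; (b) the loop-back states $\cs_{ij}^k$ ($2 \leq k \leq i$), each with a single incoming edge from $\cs_{ij}^{k-1}$, of probability $p_i$ when $k=2$ and $1$ when $k \geq 3$; and (c) the root $\cs_{01}^1$, fed by its own self-loop (probability $p_0$), by the terminal returns $\cs_{ij}^i \to \cs_{01}^1$ for $i \geq 2$ (probability $1$), and by the direct returns $\cs_{1j}^1 \to \cs_{01}^1$ for $i = 1$ (probability $p_1$).

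Family (b) immediately gives $\pi_{ij}^k = p_i\,\pi_{ij}^1$ for all $2 \leq k \leq i$, which is Eq.~(\ref{eq:pi_ijk}) once one substitutes $p_i = (2i+1)/(i+1)^2$ from Eq.~(\ref{eq:p_i}). Family (a) reads $\pi_{ij}^1 = q_{i-1}\,\pi_{(i-1)j'}^1$; since the right-hand side is independent of which level-$(i-1)$ node is the parent, the recursion closes on the level index alone, and iterating it down to the root gives the telescoping product
\begin{align*}
\pi_{ij}^1 = \pi_{01}^1 \prod_{l=0}^{i-1} q_l = \pi_{01}^1\, q_0 \prod_{l=1}^{i-1} \frac{l^2}{2(l+1)^2} = \frac{\pi_{01}^1\, q_0}{2^{\,i-1}\, i^2} = \frac{C}{i^2\, 2^i} ~,
\end{align*}
where $\prod_{l=1}^{i-1} l/(l+1) = 1/i$ collapses the product and $C = 2q_0\,\pi_{01}^1 = (1-p_0)\,\pi_{01}^1$. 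This is Eq.~(\ref{eq:pi_ij1}), so the form is established; summability, $\sum_\cs \pi_\cs < \infty$, is then automatic for the (existing) stationary distribution, but can also be checked directly since the tree nodes contribute $C\sum_i i^{-2}$ and the loop-back states contribute at most $\sum_i 2C\,i^{-2}$ using $p_i \leq 2/i$.

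The one genuinely computational step is confirming that family (c) is consistent with the formulas just derived---this is guaranteed by stationarity, but verifying it is a useful check and is where care is needed. Substituting $\pi_{ij}^i = p_i\,\pi_{ij}^1$ and summing the $2^i$ nodes at each level, the total inflow to the root from all returns (with the $i=1$ direct returns folded into the same general term rather than handled separately) becomes $\pi_{01}^1 p_0 + C\sum_{i \geq 1}(2i+1)/[i^2(i+1)^2]$. I expect the crux to be recognizing the telescoping identity $\sum_{i \geq 1}(2i+1)/[i^2(i+1)^2] = \sum_{i \geq 1}\bigl(1/i^2 - 1/(i+1)^2\bigr) = 1$, which collapses the sum and yields root inflow $\pi_{01}^1 p_0 + C = \pi_{01}^1 p_0 + \pi_{01}^1(1-p_0) = \pi_{01}^1$, as required. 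The main obstacle is thus not analytic depth but correctly transcribing the loop-back bookkeeping---in particular the special roles of the root and of the length-one return path at $i=1$---so that the two telescoping cancellations (the product in (a) and the sum in (c)) come out cleanly.
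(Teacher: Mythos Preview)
Your proof is correct and follows essentially the same route as the paper: invoke positive recurrence for existence and uniqueness, then solve the balance equations level by level, using the single-parent structure of the tree nodes to get a telescoping product for $\pi_{ij}^1$ and the deterministic loop-back to get $\pi_{ij}^k = p_i\,\pi_{ij}^1$. The paper does this at the aggregate level (working with $\pi_i^1 = \sum_j \pi_{ij}^1$ and the recursion $\pi_{i+1}^1 = (1-p_i)\,\pi_i^1$, then invoking symmetry), whereas you work node-by-node; these are equivalent. Your explicit verification of the root balance via the telescoping identity $\sum_{i \geq 1}(2i+1)/[i^2(i+1)^2] = 1$ is an extra consistency check the paper omits, since existence plus uniqueness already guarantees it.
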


\begin{proof}
Existence of a unique stationary distribution $\pi$ is guaranteed by Claim 
\ref{cla:PositiveRecurrent}. Given this, clearly
$\pi_1^1 = \pi_{01}^1 (1 - p_0)$. Similarly, for $i \geq 1$,
$\pi_{i+1}^1 = \pi_i^1 (1 - p_i) = \pi_i^1 \frac{i^2}{(i+1)^2}$, from which it
follows by induction that $\pi_i^1 =  \pi_{01}^1 (1 - p_0)/i^2$, for all
$i \geq 1$. By symmetry $\pi_{ij}^1 = \pi_i^1/2^i$ for each $i \in \N$ and
$1 \leq j \leq 2^i$. Therefore, for each
$i \in \N$, $1 \leq j \leq 2^i$ we have $\pi_{ij}^1 =  \pi_{01}^1 (1 - p_0)/(i^2 \cdot 2^i) =  C/(i^2 \cdot 2^i)$ 
as was claimed. Moreover, $\pi_{ij}^2 = \pi_{ij}^1 \cdot p_i =  \pi_{ij}^1 \cdot \frac{2i + 1}{(i+1)^2}$. Combining with
the expression for $\pi_{ij}^1$ gives $\pi_{ij}^2 =  \frac{C}{i^2 \cdot 2^i} \cdot \frac{2i + 1}{(i+1)^2}$. 
By induction, $\pi_{ij}^2 = \pi_{ij}^3 = ... ~ =  \pi_{ij}^i$, so this completes the proof. 
\end{proof}

Note that for all $i \geq 1$ and $1 \leq j \leq 2^i$:
\begin{align}
\label{eq:pi_ij_lowerbound}
\pi_{ij} & = \frac{C}{2^i \cdot i^2} + (i-1) \frac{C}{2^i \cdot i^2} \cdot \frac{2i+1}{(i+1)^2} \geq  \frac{C}{2^i \cdot i^2} ~, \mbox{ and } \\
\label{eq:pi_ij_upperbound}
\pi_{ij} & = \frac{C}{2^i \cdot i^2} + (i-1) \frac{C}{2^i \cdot i^2} \cdot \frac{2i+1}{(i+1)^2} \leq  \frac{3C}{2^i \cdot i^2} ~.
\end{align} 
Also note that for any $t \in \N$ and $i \geq 2t$ we have for each $1 \leq j \leq 2^i$:
\begin{enumerate}
\item $\P(\Future^t \in W_t|\CS_0 = \cs_{ij}^k) = 1$, for $2 \leq k \leq \lceil i/2 \rceil +1$.
\item	$\left( \sum_{k=2}^i \pi_{ij}^k \right)/\pi_{ij} \geq 1/3$ and 
         	$|\{k:2 \leq k \leq \lceil i/2 \rceil +1\}| \geq \frac{1}{2} \cdot |\{k:2 \leq k \leq i\}|$.
	Hence, $\left( \sum_{k=2}^{\lceil i/2 \rceil +1} \pi_{ij}^k \right)/\pi_{ij} \geq 1/6$. 
\end{enumerate}
Therefore, for each $t \in \N$:
\begin{align}
\label{eq:PrFuturetGood}
& \P(\Future^t \in W_t| \CS_0 \in R_{ij}) \geq 1/6 ~,
	\mbox{ for all } i \geq 2t \text{ and } 1 \leq j \leq 2^i ~.
\end{align}
Equations (\ref{eq:pi_ij_lowerbound}), (\ref{eq:pi_ij_upperbound}), and (\ref{eq:PrFuturetGood}) will be used
in the proof of Claim \ref{cla:ProbWt} below, along with the following simple lemma.
\begin{Lem}[Integral Test]
\label{lem:IntegralTest}
Let $n \in \N$ and let $f:[n,\infty] \rightarrow \R$ be a
positive, continuous, monotone-decreasing function, then:
\begin{align*}
\int_n^{\infty} f(x) dx \leq \sum_{k=n}^{\infty} f(k) \leq f(n) + \int_n^{\infty} f(x) dx ~. 
\end{align*}
\end{Lem}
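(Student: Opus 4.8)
The plan is to exploit the monotonicity of $f$ to sandwich the integral of $f$ over each unit interval $[k,k+1]$ between its two endpoint values, and then to sum these per-interval estimates. This is the classical integral-test comparison, so I expect the argument to be short; the only genuine subtlety is bookkeeping for the case where the sum or integral is infinite.

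First I would fix an integer $k \geq n$ and use that $f$ is monotone decreasing to get $f(k+1) \leq f(x) \leq f(k)$ for every $x \in [k,k+1]$. Integrating this pointwise inequality over $[k,k+1]$ (the integral exists by positivity and continuity) yields
\begin{align*}
f(k+1) \leq \int_k^{k+1} f(x)\,dx \leq f(k) ~.
\end{align*}
For the lower bound of the lemma I would sum the right-hand piece $\int_k^{k+1} f(x)\,dx \leq f(k)$ over all $k \geq n$; the integrals telescope into one improper integral, giving
\begin{align*}
\int_n^{\infty} f(x)\,dx = \sum_{k=n}^{\infty} \int_k^{k+1} f(x)\,dx \leq \sum_{k=n}^{\infty} f(k) ~.
\end{align*}
For the upper bound I would instead sum the left-hand piece $f(k+1) \leq \int_k^{k+1} f(x)\,dx$ over $k \geq n$, so that
\begin{align*}
\sum_{k=n+1}^{\infty} f(k) = \sum_{k=n}^{\infty} f(k+1) \leq \int_n^{\infty} f(x)\,dx ~,
\end{align*}
and then add the single omitted term $f(n)$ to both sides to reach $\sum_{k=n}^{\infty} f(k) \leq f(n) + \int_n^{\infty} f(x)\,dx$. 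Combining the two displays gives the stated chain of inequalities.

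There is essentially no hard step here; the one point requiring care is that either side may equal $+\infty$, so the inequalities should be read in the extended reals $[0,\infty]$. Positivity makes all partial sums and partial integrals monotone in their upper limit, so the infinite sum and the improper integral are well defined as suprema (possibly $+\infty$), and the termwise and per-interval comparisons pass to the limit without difficulty. Thus the ``main obstacle'' is purely a matter of stating the result for possibly-divergent quantities rather than any real analytic difficulty.
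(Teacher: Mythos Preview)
Your argument is correct and is exactly the classical integral-test comparison. The paper itself does not supply a proof of this lemma; it is stated without proof as a standard calculus fact and then invoked in the proof of Claim~\ref{cla:ProbWt}. So there is nothing to compare against, and your write-up is fine as it stands.
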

\noindent

\begin{Cla}
\label{cla:ProbWt}
$\P(W_t)$ decays roughly as $1/t$. More exactly, $C/12t \leq \P(W_t) \leq 6C/t$
for all $t \in \N$. 
\end{Cla}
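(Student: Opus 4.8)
The plan is to compute $\P(W_t)=\P(\Future^t \in W_t)$, the probability that the first $t$ emitted symbols are all $2$s and $3$s, by conditioning on the initial state $\CS_0$ and grouping the states according to the component $R_{ij}$ that contains them. The governing structural observation is that the symbols $2$ and $3$ are emitted only along the deterministic return paths $l_{ij}$: once such a path reaches the root $\cs_{01}^1$ the next symbol is a $4$, a $0$, or a $1$, so a maximal run of $2$s and $3$s is exactly a tail of a single return segment and therefore has length at most $i$ (the depth of that branch). Consequently the event $\Future^t \in W_t$ can occur only when $\CS_0$ lies on a return path inside a component $R_{ij}$ of depth $i \geq t$, and runs coming from different returns never concatenate. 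Writing $\P(W_t) = \sum_{\cs}\pi_{\cs}\,\P(\Future^t \in W_t \mid \CS_0 = \cs)$ and collecting terms by component then reduces everything to estimating $\sum_{i}\sum_{j}\pi_{ij}$ over the relevant range of $i$.

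For the upper bound I would use that only components with $i \geq t$ contribute and that $\P(\Future^t \in W_t \mid \CS_0 \in R_{ij}) \leq 1$, so that $\P(W_t) \leq \sum_{i \geq t}\sum_{j=1}^{2^i}\pi_{ij}$. Substituting $\pi_{ij} \leq 3C/(2^i i^2)$ from \eqnref{eq:pi_ij_upperbound} and summing over the $2^i$ values of $j$ cancels the factor $2^i$, leaving $3C\sum_{i \geq t} 1/i^2$. Applying the upper half of the Integral Test (\lemref{lem:IntegralTest}) with $f(x)=1/x^2$ gives $\sum_{i \geq t}1/i^2 \leq 1/t^2 + \int_t^{\infty} x^{-2}\,dx = 1/t^2 + 1/t \leq 2/t$, hence $\P(W_t) \leq 6C/t$.

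For the lower bound I would discard all components except those with $i \geq 2t$, for which \eqnref{eq:PrFuturetGood} supplies the clean quantitative estimate $\P(\Future^t \in W_t \mid \CS_0 \in R_{ij}) \geq 1/6$. Combining this with $\P(\CS_0 \in R_{ij}) = \pi_{ij} \geq C/(2^i i^2)$ from \eqnref{eq:pi_ij_lowerbound} and again summing over $j$ yields $\P(W_t) \geq \tfrac{C}{6}\sum_{i \geq 2t} 1/i^2$. The lower half of the Integral Test gives $\sum_{i \geq 2t} 1/i^2 \geq \int_{2t}^{\infty} x^{-2}\,dx = 1/(2t)$, so $\P(W_t) \geq C/(12t)$.

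The bulk of this is routine arithmetic with the Integral Test; the step needing genuine care is the structural reduction behind the upper bound, namely verifying that every maximal block of $2$s and $3$s is a single uninterrupted return segment (with no concatenation across returns, since each return terminates at the root) so that only depth-$\geq t$ components appear and each contributes at most $\pi_{ij}$. It is also worth noting that the mismatch between the cutoffs $i \geq t$ (upper bound) and $i \geq 2t$ (lower bound, the range where the uniform $1/6$ estimate is available) is precisely what produces the gap between the constants $1/12$ and $6$.
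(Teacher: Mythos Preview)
Your argument is correct and mirrors the paper's proof essentially step for step: the same decomposition by component $R_{ij}$, the same observation that only $i \geq t$ contributes, the same use of \eqnref{eq:pi_ij_lowerbound}, \eqnref{eq:pi_ij_upperbound}, and \eqnref{eq:PrFuturetGood}, the same cutoff $i \geq 2t$ for the lower bound, and the same application of \lemref{lem:IntegralTest}. The only cosmetic difference is that the paper invokes symmetry to collapse the sum over $j$ to a single term multiplied by $2^i$, whereas you sum over all $j$ directly; the arithmetic is identical.
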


\begin{proof}
For any state $\sigma_{ij}^k$ with $i < t$, $\P(\Future^t \in W_t|\CS_0 = \sigma_{ij}^k) = 0$. Thus, we have:
\begin{align}
\label{eq:ProbWt}
\P(W_t) 	& = \P(\Future^t \in W_t) \nonumber \\
		& = \sum_{i = t}^{\infty} \sum_{j=1}^{2^i} \P(\CS_0 \in R_{ij}) \cdot \P(\Future^t \in W_t|\CS_0 \in R_{ij}) \nonumber \\
		& = \sum_{i = t}^{\infty}  2^i \cdot \P(\CS_0 \in R_{i1}) \cdot \P(\Future^t \in W_t|\CS_0 \in R_{i1}) ~,
\end{align}
where the second equality follows from symmetry. We prove the bounds from above 
and below on $\P(W_t)$ separately using Eq. (\ref{eq:ProbWt}).
\begin{itemize}
\item \emph{Bound from below}: 
\vspace{-5 mm}
\begin{align}
\P(W_t) 	& = \sum_{i = t}^{\infty}  2^i \cdot \P(\CS_0 \in R_{i1}) \cdot \P(\Future^t \in W_t|\CS_0 \in R_{i1}) \nonumber \\
		& \geq  \sum_{i = 2t}^{\infty}  2^i \cdot \P(\CS_0 \in R_{i1}) \cdot \P(\Future^t \in W_t|\CS_0 \in R_{i1}) \nonumber \\
		& \stackrel{(a)}{\geq} \sum_{i = 2t}^{\infty} 2^i \cdot \frac{C}{2^i \cdot i^2} \cdot \frac{1}{6} \nonumber \\
		& = \frac{C}{6}  \sum_{i = 2t}^{\infty} \frac{1}{i^2} \nonumber \\
		& \stackrel{(b)}{\geq} \frac{C}{6} \int_{2t}^{\infty} \frac{1}{x^2} dx \nonumber \\
		& = \frac{C}{12t}  ~.
\end{align}
Here, (a) follows from Eqs. (\ref{eq:pi_ij_lowerbound}) and
(\ref{eq:PrFuturetGood}) and (b) from Lemma \ref{lem:IntegralTest}.
\item \emph{Bound from above}: 
\vspace{- 5 mm}
\begin{align}
\P(W_t) 	& = \sum_{i = t}^{\infty}  2^i \cdot \P(\CS_0 \in R_{i1}) \cdot \P(\Future^t \in W_t|\CS_0 \in R_{i1}) \nonumber \\
		& \stackrel{(a)}{\leq} \sum_{i = t}^{\infty} 2^i \cdot \frac{3C}{2^i \cdot i^2} \cdot 1 \nonumber \\
		& = 3C  \sum_{i = t}^{\infty} \frac{1}{i^2} \nonumber \\
		& \stackrel{(b)}{\leq} 3C \left( \frac{1}{t^2} + \int_{t}^{\infty} \frac{1}{x^2} dx  \right) \nonumber \\
		& = 3C \cdot \left( \frac{1}{t^2} + \frac{1}{t} \right) \nonumber \\
		& \leq \frac{6C}{t} ~.
\end{align}
Here, (a) follows from Eq. (\ref{eq:pi_ij_upperbound}) and (b) from Lemma \ref{lem:IntegralTest}.
\end{itemize}
\end{proof}

\begin{Cla}
\label{cla:ProbWt_givenw}
$\P(\MS_t \in \{2,3\} | \Future^t = w) \geq 1/150$, for all $t \in \N$ and
$w \in W_t$. 
\end{Cla}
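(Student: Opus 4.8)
The plan is to reduce the claim to a statement about the conditional state distribution $\phi(w)$ and, ultimately, to a single quantity: the conditional probability of sitting at the root. First I would pin down the support of $\phi(w)$ for $w \in W_t$. The symbols $2$ and $3$ are emitted only on return-path transitions, i.e. $\sigma_{ij}^k \to \sigma_{ij}^{k+1}$ for $1 \le k < i$ or $\sigma_{ij}^i \to \sigma_{01}^1$, and the target of any such transition is either an interior return state $\sigma_{ij}^k$ with $2 \le k \le i$ or the root $\sigma_{01}^1$. Since the last observed symbol $\MS_{t-1}$ lies in $\{2,3\}$, it follows that $\CS_t$ is supported only on states of these two kinds; in particular $\CS_t \ne \sigma_{ij}^1$ for any $i \ge 1$, because such a state is entered only by emitting a $0$ or a $1$.

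Second, I would observe that from any interior return state $\sigma_{ij}^k$ with $2 \le k \le i$ the next transition is forced (a return path has no branching once begun) and emits a symbol in $\{2,3\}$, whereas from the root only $0$, $1$, or $4$ can be emitted. Hence
\begin{align*}
\P(\MS_t \in \{2,3\} \mid \Future^t = w) = 1 - \phi(w)_{\sigma_{01}^1} ~,
\end{align*}
and the claim reduces to the bound $\phi(w)_{\sigma_{01}^1} \le 149/150$.

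Third, I would compute $\phi(w)_{\sigma_{01}^1} = B/(A+B)$, where $B = \P(\CS_t = \sigma_{01}^1, \Future^t = w)$ and $A = \P(\CS_t \text{ interior}, \Future^t = w)$, by enumerating starting states. By unifilarity each starting state on a return path determines a unique trajectory, so I would organize the sum by the depth $i \ge t$ of the return path carrying $w$. The copy mechanism forces the $t$ observed symbols to match $t$ positions of the corresponding downward path, leaving $2^{i-t}$ consistent leaves $j$ at depth $i$; within the return path the starting position $s = i-t+1$ lands at the root, while $s = 1, \dots, i-t$ land at interior states. Reading the stationary weights off Claim~\ref{cla:StationaryDistribution} and including the turn-back factor $p_i$ at $s=1$, every admissible configuration at depth $i$ carries the \emph{same} weight $\frac{C}{i^2 2^i}\,p_i$, and the crucial cancellation $2^{i-t}\cdot \frac{C}{i^2 2^i} = \frac{C}{2^t i^2}$ makes both $A$ and $B$ equal to $\frac{C}{2^t}$ times an explicit rational sum over $i$. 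Using the telescoping identity $\frac{2i+1}{i^2(i+1)^2} = \frac{1}{i^2} - \frac{1}{(i+1)^2}$ one finds $B = \frac{C}{2^t t^2}$ in closed form and bounds $A$ below by a comparable quantity, so that $B/A$ stays below a fixed constant (in fact $A/(A+B)$ is of order $1/4$, comfortably above $1/150$).

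The conceptual heart is the first step: recognizing that a maximal block of $2$s and $3$s must live inside a single return path, so that the only way to be poised to emit a symbol outside $\{2,3\}$ is to have just completed a return path and arrived back at the root. The main technical obstacle is the enumeration in the third step, since the copy mechanism makes $w$ consistent with return paths of \emph{every} depth $i \ge t$; one must correctly count the $2^{i-t}$ free leaf-choices and the admissible starting positions at each depth, and treat the boundary term $i=t$ (where the start is the tree node $\sigma_{ij}^1$ and carries the extra turn-back factor $p_t$) separately from the interior terms.
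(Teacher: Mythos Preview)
Your approach is correct and genuinely different from the paper's. The paper never computes $\phi(w)_{\sigma_{01}^1}$ at all; instead it leverages Claim~\ref{cla:ProbWt} directly. Since $\{\Future^{t+1}\in W_{t+1}\}\subset\{\Future^t\in W_t\}$, one has
\[
\P(\MS_t\in\{2,3\}\mid \Future^t\in W_t)=\frac{\P(W_{t+1})}{\P(W_t)}\ge \frac{C/12(t+1)}{6C/t}\ge \frac{1}{150},
\]
and then a one-line symmetry argument (all $w\in W_t$ induce the same conditional) transfers the bound to each individual $w$. This is a two-line proof once Claim~\ref{cla:ProbWt} is in hand.

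Your route trades that slickness for self-containment: you bypass Claim~\ref{cla:ProbWt} entirely, working instead from the explicit stationary weights of Claim~\ref{cla:StationaryDistribution}. The structural reduction $\P(\MS_t\in\{2,3\}\mid\Future^t=w)=1-\phi(w)_{\sigma_{01}^1}$ is clean, and your enumeration is right, including the key observation that the start $k=1$ (weight $\pi_{ij}^1\cdot p_i$) and the starts $k\ge 2$ (weight $\pi_{ij}^k=\pi_{ij}^1\cdot p_i$) carry identical mass, so no boundary case actually needs separate treatment. Your telescoping gives the exact value $B=\tfrac{C}{2^t t^2}$, which the paper's argument does not extract. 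Two small remarks: the ``boundary term $i=t$'' you flag is not special (the $k=1$ start occurs at every depth $i\ge t$, and as you note the weight matches anyway); and your estimate ``$A/(A+B)$ of order $1/4$'' undersells the truth, since $A/B$ grows like $t$, so the conditional probability in fact tends to $1$.
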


\begin{proof}
Applying Claim \ref{cla:ProbWt} we have for any $t \in \N$: 
\begin{align*}
\P(\MS_t \in \{2,3\} | \Future^t \in W_t) 
& = \P(\Future^{t+1} \in W_{t+1} | \Future^t \in W_t) \\
& = \P(\Future^{t+1} \in W_{t+1},  \Future^t \in W_t) / \P(\Future^t \in W_t) \\
& = \P(\Future^{t+1} \in W_{t+1}) / \P(\Future^t \in W_t) \\
& \geq \frac{ C/12(t+1) }{ 6C/t} \\
& = \frac{1}{72} \cdot \frac{t}{t+1} \\
& \geq \frac{1}{150} ~.
\end{align*}
By symmetry, $\P(\MS_t \in \{2,3\} | \Future^t = w)$ is the same for each $w \in W_t$. Thus, the same bound
must also hold for each $w \in W_t$ individually: $\P(\MS_t \in \{2,3\} | \Future^t = w) \geq 1/150$ for all $w \in W_t$.
\end{proof}

\begin{Cla} 
\label{cla:hw_and_thw}
For each $t \in \N$ and $w \in W_t$, (i) $\th_w \leq 1/300$ and (ii) $h_w \geq 1/150$. Hence, $h_w - \th_w \geq 1/300$.
\end{Cla}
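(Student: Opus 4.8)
The plan is to treat the two bounds separately and then subtract. Both rest on one structural fact: a word $w \in W_t$ (all 2s and 3s) can only be emitted while the machine traverses a single loop-back path, because every copied symbol is a 2 or 3 and the only way to leave a loop-back is through the root $\cs_{01}^1$, which emits $0$, $1$, or $4$. I would use this to pin down, in both parts, exactly which states are compatible with having just output $w$.

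For part (i) I would first determine the support of the conditional state distribution $\phi(w)$. Since $w$ consists only of 2s and 3s, the post-word state $\CS_t$ cannot be an interior tree node $\cs_{ij}^1$ with $i \geq 1$, as reaching such a node requires emitting a $0$ or $1$. Hence $\phi(w)$ is supported only on (a) interior loop-back states $\cs_{ij}^k$ with $2 \leq k \leq i$, whose outgoing transition is deterministic so that $h_{\cs} = 0$, and (b) the root $\cs_{01}^1$, reached when the loop-back completes exactly at step $t$, for which $h_{\cs_{01}^1} = H[(p_0,q_0,q_0)] \leq 1/300$ by the parameter choice. As $\th_w = \sum_{\cs} \phi(w)_{\cs} h_{\cs}$ is a convex combination of these values, it is at most the largest of them, which gives $\th_w \leq 1/300$ at once.

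For part (ii) I would combine Claim \ref{cla:ProbWt_givenw} with a symmetry argument. Write $\alpha = \P(\MS_t \in \{2,3\}|\Future^t = w) \geq 1/150$. The key point is that the left-right symmetry of the construction (identical probabilities on the two children of each node, and stationary weights $\pi_{ij}^k$ independent of $j$) forces $\P(\MS_t = 2|w) = \P(\MS_t = 3|w) = \alpha/2$: flipping the single down-path branch that determines the next copied symbol is a measure-preserving involution on the realizations producing $w$ that remain on the loop-back at time $t$, and it interchanges $\{\MS_t = 2\}$ with $\{\MS_t = 3\}$ while fixing $\Future^t = w$. Retaining only the $\{2,3\}$ terms in the entropy sum (the rest are nonnegative) then gives $h_w \geq -2\cdot(\alpha/2)\log(\alpha/2) = \alpha \log(2/\alpha) \geq \alpha \geq 1/150$, where I use $\log(2/\alpha) \geq 1$ for $\alpha \leq 1$. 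Subtracting, $h_w - \th_w \geq 1/150 - 1/300 = 1/300$.

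The main obstacle is part (ii), and precisely the need for the symmetry step rather than a cruder bound. The naive approach lower-bounds $h_w$ by the binary entropy of the coarse partition $\{2,3\}$ versus $\{0,1,4\}$, namely $H(\alpha)$; but this is useless here, because $\P(W_t)$ and $\P(W_{t+1})$ both decay like $1/t$, so $\alpha = \P(W_{t+1})/\P(W_t) \to 1$ and $H(\alpha) \to 0$. The entropy must instead come almost entirely from the residual $2$-versus-$3$ uncertainty, which the symmetry supplies. The delicate part is verifying that this symmetry truly fixes $w$ while swapping only the next symbol, uniformly over the loop-back phase at time $0$; I would argue this by pairing realizations of equal phase and flipping the appropriate down-path coordinate, which preserves probability because all branching weights are left-right symmetric.
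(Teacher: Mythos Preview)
Your proposal is correct and follows essentially the same approach as the paper: part (i) is identical in structure, and part (ii) uses the same two ingredients (Claim~\ref{cla:ProbWt_givenw} for the mass $\alpha$ on $\{2,3\}$, plus the left-right symmetry forcing equal probability of $2$ and $3$). The only cosmetic difference is that the paper packages the entropy lower bound by conditioning on an auxiliary indicator $Z_t = \indicator\{\MS_t \in \{2,3\}\}$, obtaining $h_w \geq \P(Z_t=1\,|\,w)\cdot H[\MS_t\,|\,w,Z_t=1] = \alpha \cdot 1$, whereas you drop the nonnegative $\{0,1,4\}$ terms directly to get $h_w \geq \alpha\log(2/\alpha) \geq \alpha$; your justification of the symmetry (pairing realizations by flipping the single down-path bit that determines $\MS_t$, which is disjoint from the bits fixed by $w$) is in fact more explicit than the paper's bare ``by symmetry''.
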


\begin{proof}[Proof of (i)] 
$h_{\sigma_{ij}^k} = 0$, for all $i \geq 1$, $1 \leq j \leq 2^i$, and
$k \geq 2$. And, for each $w \in W_t$, $\phi(w)_{\cs_{ij}^1} = 0$, for all
$i \geq 1$ and $1 \leq j \leq 2^i$. 
Hence, for each $w \in W_t$, $\th_w =  \sum_{\sigma \in \CSSet} \phi(w)_{\sigma} h_{\sigma} = \phi(w)_{ \sigma_{01}^1}  h_{ \sigma_{01}^1}$.
By construction of the machine $h_{ \sigma_{01}^1} \leq 1/300$ and,
clearly, $\phi(w)_{ \sigma_{01}^1}$ can never exceed $1$.
Thus, $\th_w \leq 1/300$ for all $w \in W_t$.
\end{proof}

\begin{proof}[Proof of (ii)] 
Let the random variable $Z_t$ be defined by: $Z_t = 1$ if $\MS_t \in \{2,3\}$
and $Z_t = 0$ if $\MS_t \not\in\{2,3\}$. By Claim \ref{cla:ProbWt_givenw},
$\P(Z_t = 1|\Future^t = w) \geq 1/150$ for any $w \in W_t$ and, by symmetry,
the probabilities of a $2$ or a $3$ following any word $w \in W_t$ are equal,
so $\P(\MS_t = 2|\Future^t = w,Z_t=1) = \P(\MS_t = 3|\Future^t = w,Z_t=1) = 1/2$. 
Therefore, for any $w \in W_t$:
\begin{align*}
h_w 	& = H[\MS_t|\Future^t=w] \\
	& \geq H[\MS_t|\Future^t=w,Z_t] \\
	& \geq  \P(Z_t = 1|\Future^t = w) \cdot H[\MS_t|\Future^t=w, Z_t=1] \\
	& \geq 1/150 \cdot 1 ~.
\end{align*}
\end{proof}

\begin{Cla}
\label{cla:hmut_decay}
The quantity $\hmu(t) - \hmu$ decays at a rate no faster than $1/t$. More exactly, $\hmu(t+1) - \hmu \geq \frac{C}{3600 t}$,  for all $t \in \N$. 
\end{Cla}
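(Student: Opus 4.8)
The plan is to assemble this claim directly from the exact decomposition in Eq.~(\ref{eq:hmut_hmu_diff}) together with Claims~\ref{cla:hw_and_thw} and~\ref{cla:ProbWt}; no genuinely new estimate is required. First I would write, for each $t \in \N$,
\begin{align*}
\hmu(t+1) - \hmu = \sum_{w \in \L_t} \P(w)\,(h_w - \th_w) ~,
\end{align*}
which is precisely Eq.~(\ref{eq:hmut_hmu_diff}).

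Next I would use the fact, noted just after that equation, that concavity of the entropy function forces $h_w - \th_w \geq 0$ for every $w$. This nonnegativity is exactly what makes it legitimate to discard every term in the sum except those indexed by the ``good'' set $W_t$ of length-$t$ words built only from $2$s and $3$s, yielding the lower bound
\begin{align*}
\hmu(t+1) - \hmu \geq \sum_{w \in W_t} \P(w)\,(h_w - \th_w) ~.
\end{align*}
Here I would invoke Claim~\ref{cla:hw_and_thw}, which supplies the uniform gap $h_w - \th_w \geq 1/300$ for every $w \in W_t$. Pulling this constant out of the sum leaves $\sum_{w \in W_t} \P(w) = \P(W_t)$, so that $\hmu(t+1) - \hmu \geq \P(W_t)/300$.

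Finally I would apply the lower bound $\P(W_t) \geq C/(12t)$ established in Claim~\ref{cla:ProbWt}. Multiplying the two constants $1/300$ and $1/12$ produces $1/3600$, giving exactly $\hmu(t+1) - \hmu \geq C/(3600\,t)$, as claimed. The main point is that there is no real obstacle at this stage: all the substantive analysis---positive recurrence, the explicit stationary weights, the $\Theta(1/t)$ scaling of $\P(W_t)$, and the uniform entropy gap on $W_t$---has already been carried out in the earlier claims, so this step is essentially bookkeeping. The one place demanding a moment's care is the passage from the full sum over $\L_t$ to the restricted sum over $W_t$, which is valid only because each summand $h_w - \th_w$ is nonnegative; this is precisely where concavity of the entropy function enters, and dropping that hypothesis would invalidate the truncation.
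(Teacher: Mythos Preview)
Your proposal is correct and follows essentially the same argument as the paper: start from Eq.~(\ref{eq:hmut_hmu_diff}), use nonnegativity of $h_w - \th_w$ to restrict the sum to $W_t$, apply Claim~\ref{cla:hw_and_thw} for the uniform $1/300$ gap, and then Claim~\ref{cla:ProbWt} for $\P(W_t) \geq C/(12t)$. Your explicit emphasis on concavity as the justification for truncating the sum is a nice touch that the paper leaves more implicit.
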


\begin{proof}
As noted above, since the machine satisfies the conditions of Prop.
\ref{prop:ExactEntropyRateFormula}, the entropy rate is given by
Eq. (\ref{eq:UnifilarEntropyRateFormula}) and the difference 
$\hmu(t+1) - \hmu$ is given by Eq. (\ref{eq:hmut_hmu_diff}).
Therefore, applying Claims \ref{cla:ProbWt} and \ref{cla:hw_and_thw} we
may bound the quantity $\hmu(t+1) - \hmu$ as follows:
\begin{align*}
\hmu(t+1) - \hmu 	& = \sum_{w \in \L_t} \P(w) (h_w - \th_w) \\
				& \geq \sum_{w \in W_t}  \P(w) (h_w - \th_w) \\
				& \geq \P(W_t) \cdot \frac{1}{300} \\
				& \geq \frac{C}{3600 t} ~.
\end{align*}
\end{proof}
With the above decay on $\hmu(t)$ established we easily see the Branching Copy Process
must have infinite excess entropy. 

\begin{Prop}
The excess entropy $\EE$ for the BC Process is infinite.
\end{Prop}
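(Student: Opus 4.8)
The plan is to combine the series representation of excess entropy in \eqnref{eq:Esum} with the per-term lower bound established in Claim \ref{cla:hmut_decay}. Recall from \eqnref{eq:Esum} that $\EE = \sum_{t=1}^{\infty} \left( \hmu(t) - \hmu \right)$, and that each summand $\hmu(t) - \hmu$ is nonnegative, since the length-$t$ entropy-rate approximations $\hmu(t)$ decrease monotonically to $\hmu$. This is the only formulation I would use; the alternative expression in \eqnref{eq:Elim} is less convenient here because it would require controlling $H[\Future^t]$ directly, whereas the summed form lets me feed in the already-established $1/t$ decay term by term.

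First I would discard the (nonnegative) $t=1$ term and reindex the tail, writing
\begin{align*}
\EE \geq \sum_{t=2}^{\infty} \left( \hmu(t) - \hmu \right) = \sum_{t=1}^{\infty} \left( \hmu(t+1) - \hmu \right) ~.
\end{align*}
Then I would substitute the bound $\hmu(t+1) - \hmu \geq \frac{C}{3600\, t}$ supplied by Claim \ref{cla:hmut_decay}, obtaining
\begin{align*}
\EE \geq \frac{C}{3600} \sum_{t=1}^{\infty} \frac{1}{t} = \infty ~,
\end{align*}
since $C > 0$ and the harmonic series diverges. The reindexing is needed only because Claim \ref{cla:hmut_decay} is phrased in terms of $\hmu(t+1)$; nonnegativity of the dropped $t=1$ term guarantees no loss of generality.

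I expect \emph{no} genuine obstacle in this proposition itself: it is the short capstone of the section, and the substantive work is entirely upstream. The real difficulty lies in Claim \ref{cla:hmut_decay}, which in turn rests on the delicate interplay of two facts I would take as given here. Claim \ref{cla:ProbWt} shows the words $W_t$ (those consisting only of $2$s and $3$s) carry probability decaying only as $1/t$, while Claim \ref{cla:hw_and_thw} shows each such word contributes at least a fixed positive amount $h_w - \th_w \geq 1/300$ of ``excess'' information per symbol. Their product yields precisely the $1/t$ lower bound, and harmonic decay is exactly the borderline rate at which $\sum_t (\hmu(t) - \hmu)$ diverges. Thus the construction's parameters—the choice $q_i = i^2/[2(i+1)^2]$ and the path-copying structure that forces $\P(W_t) \sim 1/t$—are tuned precisely to sit on the divergent side of this threshold, and it is there, not in the final summation, that the content of the result resides.
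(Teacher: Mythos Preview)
Your proof is correct and follows exactly the paper's approach: invoke the series formula $\EE = \sum_{t=1}^{\infty} (\hmu(t) - \hmu)$ and apply Claim~\ref{cla:hmut_decay} to conclude divergence via comparison with the harmonic series. The paper's own proof is a terse two-line version of what you wrote; your explicit handling of the index shift and nonnegativity of the dropped term is a welcome clarification rather than a departure.
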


\begin{proof}
$\EE = \sum_{t=1}^{\infty} \left( \hmu(t) - \hmu \right)$. By Claim
\ref{cla:hmut_decay}, this sum must diverge.  
\end{proof}

%*********** CONCLUSION *************
\section{Conclusion}
\label{sec:Conclusion}

Any stationary, finite-alphabet process may be represented as an invariant HMM
with an uncountable state set. Thus, there exist invariant HMMs with
uncountable state sets capable of generating infinitary processes over finite
alphabets. It is impossible, however, to have a finite-state invariant HMM that 
generates an infinitary process. The excess entropy $\EE$ is always
bounded by the entropy in the stationary distribution $H[\pi]$, which is finite
for any finite-state HMM. Countable-state HMMs are intermediate between the
finite and uncountable cases, and it was previously unknown whether infinite
excess entropy was possible in this case. We have demonstrated that it is
indeed possible, by giving two explicit constructions of finite-alphabet
infinitary processes generated by invariant HMMs with countable state sets. 

The second example, the Branching Copy Process, is also ergodic---a strong
restriction. It is a priori quite plausible that infinite $\EE$ might only
occur in the countable-state case for nonergodic processes. Moreover, both
HMMs we constructed are unifilar, so the \eMs\ \cite{Crut88a, Lohr10a} of
the processes have countable state sets as well. Again, unifilarity is a
strong restriction to impose, and it is a priori conceivable that infinite
$\EE$ might only occur in the countable-state case for nonunifilar HMMs.
Our examples have shown, though, that infinite $\EE$ is possible for
countable-state HMMs, even if one requires both ergodicity and unifilarity. 

%*********** APPENDIX *************
\appendix

% ********* Appendix A ***********
\section{}
\label{AppendixA}
We prove Prop. \ref{prop:ExactEntropyRateFormula} from Sec.
\ref{subsec:HiddenMarkovModels}, which states that the entropy rate of any
countable-state, exact, unifilar HMM is given by the standard formula:
\begin{align}
\hmu = H[\MS_0|\CS_0] = \sum_{\cs \in \CSSet} \pi_{\cs} h_{\cs} ~.
\end{align}

\begin{proof}
Let $\L_t = \{w: |w| = t, \P(w) > 0\}$ be the set of length $t$ words in the
process language $\L$, and let $\phi(w)$ be the conditional state distribution
induced by a word $w \in \L$: i.e.,
$\phi(w)_{\cs} = \P(\CS_t = \cs|\Future^t = w$). Furthermore, let
$\th_w = \sum_{\cs} \phi(w)_{\cs} h_{\cs}$ be the $\phi(w)$-weighted average
entropy in the next symbol given knowledge of the current state $\cs$. And let 
$h_w = H[\MS_t|\Future^t = w] = H[\MS_0|\CS_0 \sim \phi(w)]$ be the entropy in the next symbol given the word $w$. Note that:
\begin{enumerate}
\item $\hmu(t+1) = H[\MS_t|\Future^t] = \sum_{w \in \L_t} \P(w) h_w$ , and
\item $ \sum_{\cs} \pi_{\cs} h_{\cs}
	= \sum_{\cs} \left( \sum_{w \in \L_t} \P(w) \phi(w)_{\cs} \right) h_{\cs}
	= \sum_{w \in \L_t} \P(w) \left( \sum_{\cs} \phi(w)_{\cs} h_{\cs} \right)
	= \sum_{w \in \L_t} \P(w) \th_w $ .
\end{enumerate}
Since we know $\hmu(t)$ limits to $\hmu$, it suffices to show that:
\begin{align}
\label{eq:zero_diff}
\lim_{t \to \infty} \sum_{w \in \L_t} \P(w) (h_w - \th_w) = 0 ~.
\end{align} 
By concavity of the entropy function, $h_w - \th_w \geq 0$ for any $w$. 
However, for a \emph{synchronizing word} $w = w_1 ... w_t$ with $H[\CS_t|\Future^t = w] = 0$, $h_w - \th_w$  
is always 0, since the distribution $\phi(w)$ is concentrated only
on a single state. Furthermore, for any $w$, 
$h_w - \th_w \leq h_w \leq \log|\XX|$. Thus:
\begin{align}
\label{eq:hmut_hmu_diff_bound}
\sum_{w \in \L_t} \P(w) (h_w - \th_w) \leq \log|\XX| \cdot \P(NS_t) ~ ,
\end{align} 
where $NS_t$ is the set of length-$t$ words that are nonsynchronizing and
$\P(NS_t)$ is the combined probability of all words in this set. Since the
HMM is exact, we know that for a.e. infinite future $\future$ an observer
will synchronize exactly at some finite time $t = t(\future)$.  And, since
it is unifilar, the observer will remain synchronized for all $t' \geq t$.
It follows that $\P(NS_t)$ must be monotonically decreasing
and limit to $0$:
\begin{align}
\label{eq:PNStLimit}
\lim_{t \to \infty} \P(NS_t) = 0 ~.
\end{align} 
Combining Eq. (\ref{eq:hmut_hmu_diff_bound}) with Eq. (\ref{eq:PNStLimit})
shows that Eq. (\ref{eq:zero_diff}) does in fact hold,
which completes the proof. 
\end{proof}  

% ********** Appendix B ************
\section{}
\label{AppendixB}

We prove the following proposition for the entropy rate of
countable-state HMMs.

\begin{Prop}
Let $M$ be a countable-state HMM and let $\Process = (X_t)$ be the process
generated by $M$. If $\Process$ does not consist entirely of periodic
sequences, then its entropy rate $\hmu$ $\Process$ is strictly positive. 
\end{Prop}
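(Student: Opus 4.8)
The plan is to prove the contrapositive: if $\hmu = 0$ then $\Process$ consists entirely of periodic sequences. Two facts drive the argument. First, zero entropy rate forces the future to be a deterministic function of the past. Second, the Markov structure makes the future \emph{independent} of the past whenever the chain sits in a fixed state. Reconciling these two facts at a recurrent state is exactly what pins down periodicity.

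I would first record that, for a stationary process, $\hmu = H[\MS_0 \mid \ldots \MS_{-2}\MS_{-1}]$, so $\hmu = 0$ means $\MS_0$ is almost surely a measurable function of the infinite past $\ldots \MS_{-2}\MS_{-1}$. Iterating forward (each $\MS_k$, $k \geq 0$, is determined by $\ldots \MS_{k-1}$ by stationarity, and those symbols are in turn determined by $\ldots \MS_{-1}$), the entire future output $\MS_0 \MS_1 \cdots$ is almost surely a fixed measurable function $F$ of the past. Note that this step uses only stationarity and $\hmu = 0$, not ergodicity or unifilarity. Separately, since $\pi$ is a stationary distribution for a countable-state chain, it is supported on positive-recurrent states; fix any state $\cs$ with $\pi_{\cs} > 0$.

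The crux is the regeneration argument at $\cs$. Conditioning on $\{\CS_0 = \cs\}$, which has positive probability $\pi_{\cs}$ so that almost-sure statements survive the conditioning, the Markov property of the edge chain gives that the future output is independent of the past output given $\CS_0 = \cs$. But the future also equals $F(\text{past})$ almost surely. A random element that is simultaneously a function of the past and independent of the past is almost surely constant; hence, conditioned on $\{\CS_0 = \cs\}$, the future output equals a fixed nonrandom sequence $\future^{\star}$. By stationarity the same holds at every time, and a countable union over $t \in \Z$ of null events promotes this to: almost surely, at \emph{every} $t$ with $\CS_t = \cs$ the forward output is $\future^{\star}$. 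Comparing the outputs emitted at two visit times to $\cs$ separated by a gap $g$ then forces $(\future^{\star})_{g+j} = (\future^{\star})_j$ for all $j$, so $\future^{\star}$ is purely periodic.

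To finish I would use two-sided recurrence of the stationary chain: almost surely $\cs$ is visited at times tending to $-\infty$, and each visit time $t_k$ fixes the output on $[t_k, \infty)$ as a shift of $\future^{\star}$; letting $t_k \to -\infty$ covers all of $\Z$ consistently, so the full bi-infinite realization is periodic. Thus $\P(\text{realization periodic} \mid \CS_0 = \cs) = 1$ for every positive-recurrent $\cs$, and averaging over $\cs$ gives $\P(\text{realization periodic}) = \sum_{\cs} \pi_{\cs} = 1$. The conceptual content sits entirely in the ``independent-and-deterministic implies constant'' step; I expect the main obstacle to be purely the measure-theoretic bookkeeping, namely transferring the almost-sure identity future $= F(\text{past})$ through the conditioning on $\{\CS_0 = \cs\}$, upgrading the per-time statement to hold simultaneously at all (random) visit times, and justifying the two-sided recurrence used in the final gluing.
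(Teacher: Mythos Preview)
Your argument is correct, but the paper takes a more direct, graph-theoretic route. It first records the inequality $\hmu \geq \sum_{\cs} \pi_{\cs} h_{\cs}$, obtained by conditioning on $\CS_t$ and using the same Markov conditional-independence fact you invoke. Thus $\hmu = 0$ forces $h_{\cs} = 0$ for every state $\cs$ with $\pi_{\cs} > 0$, i.e., each such state has a single outgoing edge. Restricting the graph to states of positive stationary probability, one checks (to preserve stationarity) that it decomposes into disjoint strongly connected components; single out-degree plus strong connectivity forces each component to be a finite deterministic cycle, and periodicity is immediate. Your process-level argument via ``independent and deterministic implies constant'' followed by recurrence is conceptually nice and would transfer to settings without an explicit graph, but here the paper's approach is shorter, sidesteps all the measure-theoretic bookkeeping you flag (no transferring of almost-sure identities through conditioning, no two-sided recurrence, no simultaneous control at random visit times), and yields the extra structural conclusion that the HMM itself, on the support of $\pi$, is a disjoint union of finite loops.
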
 

\begin{proof}
For any countable-state HMM $M$, the future output sequence and past output
sequence are conditionally independent given the current state. Thus, for all $t \in \N$, 
$H[\MS_t|\Future^t, \CS_t] = H[\MS_t|\CS_t]$. Also, by stationarity 
$H[\MS_t|\CS_t] = H[\MS_0|\CS_0] = \sum_{\cs} \pi_{\cs} h_{\cs}$, for all $t$.
Combining these facts shows that entropy rate is always bounded below by the 
standard unifilar formula of Eq. (\ref{eq:UnifilarEntropyRateFormula}):
\begin{align}
\label{eq:hmu_lowerbound}
\hmu 	& = \lim_{t \to \infty} H[\MS_t|\Future^t] \nonumber \\
		& \geq \lim_{t \to \infty} H[\MS_t|\CS_t, \Future^t] \nonumber \\
		& = \lim_{t \to \infty} H[\MS_t|\CS_t] \nonumber \\
		& = \sum_{\cs \in \CSSet} \pi_{\cs} h_{\cs}  ~.
\end{align}
Therefore, the entropy rate is positive if $h_{\cs} > 0$ for any state $\cs$
with nonzero probability $\pi_{\cs}$ or, equivalently, if there are at least
two outgoing edges in the associated graph from state $\cs$. 

Now, assume there is no such state. Consider the restricted state set
$\widetilde{\CSSet}$ consisting of states $\cs$ with positive probability
($\pi_{\cs} > 0$) and the restricted graph $\widetilde{G}$ associated to
this state set. Clearly, the HMM $\widetilde{M}$ defined by this graph with stationary distribution $\pi$ generates the same process $\Process$ as the
original HMM. And, it is also easily seen that in order to keep the
distribution $\pi$ stationary, the graph $\widetilde{G}$ must consist entirely
of disjoint strongly connected components. That is, each connected component
of $\widetilde{G}$ must be strongly connected. Take any strongly connected
component $C_i$ in $\widetilde{G}$. Since each state $\cs$ in $C_i$ has only
a single outgoing edge and $C_i$ is strongly connected, it follows that $C_i$
must be a deterministic loop of some finite length $l_i$. Since this holds for
each strongly connected component $C_i$ in $\widetilde{G}$ and the HMM
$\widetilde{M}$ is always run from one of the $C_i$s, it follows that all
sequences $\biinfinity = ... \ms_{-1} \ms_{0} \ms_{1} ...$ generated by
$\widetilde{M}$ are periodic. Or, equivalently, all sequences generated by
$M$ are periodic.  
\end{proof} 

% *********** REFERENCES *************
\bibliography{ref,chaos}

\end{document}